\documentclass[11pt]{article}
\usepackage{palatino}
\usepackage{url}

\usepackage{pgfplots} 
\usepackage{tikz, tikzpeople}
\usetikzlibrary{calc,intersections,decorations.pathmorphing,decorations.markings,patterns,arrows.meta,positioning}
\usetikzlibrary{angles,quotes}
\usetikzlibrary{arrows.meta}
\usetikzlibrary{arrows}
\usetikzlibrary{shapes.multipart}
\usetikzlibrary{fit}
\usetikzlibrary{shapes.geometric,positioning}
\usetikzlibrary{calc}
\usepackage{bm}
\usepackage{mathtools}
\usepackage{xfrac}
\usepackage{graphicx}
\usepackage[linesnumbered,ruled,vlined]{algorithm2e}
\usepackage{cite}
\usepackage{amsmath,amssymb,amsfonts, amsthm,  mathrsfs}
\usepackage{algpseudocode}
\usepackage{textcomp}
\usepackage{xcolor}
\usepackage{braket}
\usepackage{enumitem}
\usepackage{cancel}
\usepackage{stfloats}

\usetikzlibrary{patterns}
\newtheorem{lemma}{Lemma}
\newtheorem{theorem}{Theorem}
\newtheorem{proposition}{Proposition}

\newtheorem{remark}{Remark}

\newtheorem{definition}{Definition}

\newlength{\leftstackrelawd}
\newlength{\leftstackrelbwd}
\def\leftstackrel#1#2{\settowidth{\leftstackrelawd}%
{${{}^{#1}}$}\settowidth{\leftstackrelbwd}{$#2$}%
\addtolength{\leftstackrelawd}{-\leftstackrelbwd}%
\leavevmode\ifthenelse{\lengthtest{\leftstackrelawd>0pt}}%
{\kern-.5\leftstackrelawd}{}\mathrel{\mathop{#2}\limits^{#1}}}

\DeclareMathOperator{\Tr}{Tr}

\newcommand{\mN}{\mathsf{N}} 
 
\newcommand{\mP}{\mathsf{P}}

\newcommand{\ch}{\mathsf{ch}}
\newcommand{\gm}{\mathsf{gm}}

\newcommand{\C}{\mathchoice
  {\mathrm{\scriptscriptstyle C}} % displaystyle
  {\mathrm{\scriptscriptstyle C}} % textstyle
  {\mathrm{\scriptscriptstyle C}} % scriptstyle
  {\mathrm{\scriptscriptstyle C}} % scriptscriptstyle
}

\newcommand{\EA}{\mathchoice
  {\mathrm{\scriptscriptstyle EA}} % displaystyle
  {\mathrm{\scriptscriptstyle EA}} % textstyle
  {\mathrm{\scriptscriptstyle EA}} % scriptstyle
  {\mathrm{\scriptscriptstyle EA}} % scriptscriptstyle
}

\newcommand{\NS}{\mathchoice
  {\mathrm{\scriptscriptstyle NS}} % displaystyle
  {\mathrm{\scriptscriptstyle NS}} % textstyle
  {\mathrm{\scriptscriptstyle NS}} % scriptstyle
  {\mathrm{\scriptscriptstyle NS}} % scriptscriptstyle
}

\allowdisplaybreaks

\title{Unlocking  Exponential and Unbounded Robust Gains in \\Shannon Capacity of Classical Multiple Access Channels \\with Causal CSIT via Quantum Entanglement Assistance}

\author{Yuhang Yao, Syed A. Jafar\\
{\small Nhu Department of Electrical Engineering and Computer Science}\\
{\small University of California Irvine, Irvine, CA 92697}\\
{\small \it Email: \{yuhangy5, syed\}@uci.edu}
}

\date{}

\begin{document}
\maketitle

\begin{abstract}
Quantum entanglement assistance is known to improve the Shannon capacity of classical communication networks but the largest gains noted thus far are rather modest (less than $6\%$), motivating the question: are  \emph{large} capacity gains ever possible? It is shown in this work that in the presence of causal channel state information at the transmitters (CSIT), quantum entanglement assistance provides a multiplicative capacity advantage that grows exponentially with the number of users $K$  for certain classical $K$-user multiple access channels with fixed size (binary) alphabet for inputs, outputs and states. Similarly, in the presence of causal channel state information at the transmitters, quantum entanglement assistance is shown to provide a multiplicative capacity advantage that is unbounded as the size of the state alphabet grows, while the number of users $(K=3)$ and the input and output alphabet (binary) are held fixed. Even with only a few users and small alphabet sizes, substantial multiplicative gains in capacity are found, e.g., with binary inputs, outputs and states, multiplicative gains by factors exceeding 21 and 88 are noted with $K=5$ and $K=7$ users, respectively. The gains are robust in the sense that they persist even with noisy quantum resources, e.g., an exponential (in $K$) capacity advantage from quantum entanglement assistance remains available even if each entangled qubit independently depolarizes completely with probability $\approx$ 30$\%$. The gains are based on quantum entanglement assistance provided only to the transmitters.
\end{abstract}
\newpage

\section{Introduction}
With continued progress toward a quantum internet \cite{Qinternet,Chehimi_Saad,caleffi_tutorial2} likely to make quantum resources increasingly accessible, it is important to understand how  these technologies might transform modern society. One promising avenue lies in using distributed quantum resources to substantially enhance the capacity of \emph{classical} communication networks.

Quantum entanglement is a resource  that can be shared among the communication nodes in advance, i.e., prior to the beginning of communication. It is a static (non-signaling) resource \cite{Wilde_CQE}, i.e., it does not by itself allow any communication. However, it expands the scope of encoding and decoding schemes that can be employed over the classical channel. The transmitting nodes may perform input-dependent local measurements on their respective quantum systems, and determine the symbols to be transmitted on the classical channel based on the classical outcomes of those measurements. Entangled quantum systems can generate \emph{non-local} correlations in the measurement outcomes.  
These are correlations that cannot be generated using shared randomness and local operations alone, without additional communication.
It is natural to explore what capacity gains, if any, are possible from such non-local correlations, i.e., from quantum entanglement assistance, in classical communication networks. 

Our focus is exclusively on how the Shannon (sum-rate) capacity of classical communication networks --- which assumes that all messages and channels are classical and requires asymptotically vanishing probability of error for code lengths approaching infinity --- may be improved by quantum entanglement assistance. Notably, this excludes notions like zero-error capacity \cite{Shannon56}, finite-blocklength capacity \cite{Polyanskiy_Poor_Verdu}, and quantum-channel capacity \cite{Devetak2005}. Significant advantages from quantum-entanglement assistance are long established in the latter categories, but thus far quite rare and modest within the former setting that is our focus in this paper. For example, it is known for well over a decade that quantum-assistance can significantly increase both one-shot and asymptotic zero-error capacities \cite{cubitt2010improving, cubitt2011zero, leung2012entanglement, Piovesan2015}. For quantum channels, it is known for over three decades that entanglement assistance between the parties can significantly improve the capacity. For example, superdense coding \cite{Superdense} enables the transmission of $2$ bits of classical information by sending one noiseless qubit and consuming one pre-shared entangled qubit pair between the transmitter and the receiver, achieving a factor of $2$ gain over the corresponding capacity  without entanglement assistance. For noisy quantum channels, unbounded capacity gain from entanglement assistance is known for various point-to-point channels  \cite{Bennett_Shor_Smolin_Thapliyal_PRL, holevo2004entanglement, guha2020infinite}. For example, Bennett et al.  \cite{Bennett_Shor_Smolin_Thapliyal_PRL, bennett_shor_capacity} and King \cite{king2003capacity}, establish that the classical communication capacity of the $d$-ary quantum depolarizing channel can be up to $d+1$ times larger  than the capacity without entanglement assistance. 

In contrast, until relatively recently, there was little evidence of  quantum-assisted Shannon-capacity advantages for classical communication over classical channels. Early results were negative. It was shown in 1999 by Bennett et al. \cite{Bennett_Shor_Smolin_Thapliyal_PRL} that in a point-to-point channel there is no capacity gain from quantum entanglement assistance. The negative result was further strengthened in 2012 by Matthews \cite{matthews2012linear}, proving that even nonsignaling assistance (which includes quantum assistance as a special case) cannot increase the capacity of a point-to-point channel. In 2014, Beigi and Gohari \cite{Beigi_Gohari_2014} pose the question, `\emph{is there an information theoretic classical setting in which shared entanglement helps}', where their definition of information-theoretic classical settings excludes zero-error, finite-blocklength and quantum channels. Positive results in multiuser settings started to emerge with the 2017 work of Quek and Shor \cite{Quek_Shor, Quek_Dissertation} followed by substantial progress within the past decade \cite{leditzky2020playing, seshadri2023separation, pereg2024MAC_QEassist, fawzi2024MAC,QMAC_Yun, Qcoop_Nam,Yao_Jafar_QEACWS,fawzi2024broadcast,Yao_Jafar_NS_DoF}. Yet, the scope of Shannon capacity advantages from quantum entanglement assistance for classical communication over classical channels remains far from understood. The question is important as future communication infrastructure may remain largely classical, even as it is augmented by quantum resources such as shared entanglement \cite{Qinternet}.

The picture that has emerged thus far is not very optimistic. Among multiuser settings, perhaps the   multiple access channel (MAC) has received the most attention \cite{ leditzky2020playing, seshadri2023separation, pereg2024MAC_QEassist, fawzi2024MAC,QMAC_Yun}. 
Although capacity gains from quantum entanglement assistance have been shown in MACs by leveraging pseudotelepathy games \cite{Quek_Shor, leditzky2020playing}, the magnitude of the gains found thus far is rather modest, not exceeding $6\%$ \cite{seshadri2023separation}.  
Beyond the MAC setting the understanding is even more limited. For example, no capacity gain due to quantum entanglement assistance has yet been found for classical broadcast channels \cite{pereg2021quantumBC,fawzi2024broadcast,Yao_Jafar_NS_DoF}, and it is known that receiver-side entanglement alone cannot improve the capacity \cite{pereg2021quantumBC}. This  raises the question: can quantum entanglement assistance ever produce \emph{large}  gains in the Shannon capacity of classical communication networks?

To address this question, we focus on entanglement-assisted multiple access channels \emph{with causal channel state information at the transmitters} (CSIT). Causal and non-causal CSIT are canonical communication state-dependent frameworks for which the capacity in the point-to-point setting was found by Shannon in \cite{shannon_CSIT} and Gelfand and Pinsker in \cite{gel1980coding}, respectively. Both causal and non-causal CSIT have also been studied in the context of quantum point-to-point channels as well \cite{dupuis_QCSIT, pereg_QCSIT}. As previously noted our focus is on classical channels. The motivation for studying this setting comes from our recent work in \cite{Yao_Jafar_QEACWS}, which explores  the capacity of  point-to-point channels with causal CSIT, and finds examples with capacity gains of up to $12/\pi^2\approx 21\%$, due to quantum entanglement assistance.  The presence of causal CSIT thus appears to be a key that can unlock capacity advantages from quantum entanglement assistance even in scenarios where no such advantage exists otherwise \cite{Bennett_Shor_Smolin_Thapliyal_PRL}. Moreover, causal CSIT is itself a well-motivated assumption in many practical settings, such as wireless uplinks and cognitive radio applications, where  transmitters possess partial knowledge of the channel state based on their local interference environment \cite{Gesbert_dist_CSIT}.

\begin{figure}[htbp]
\center
\begin{tikzpicture}
	\node at (0,0) {\includegraphics[width=0.65\textwidth]{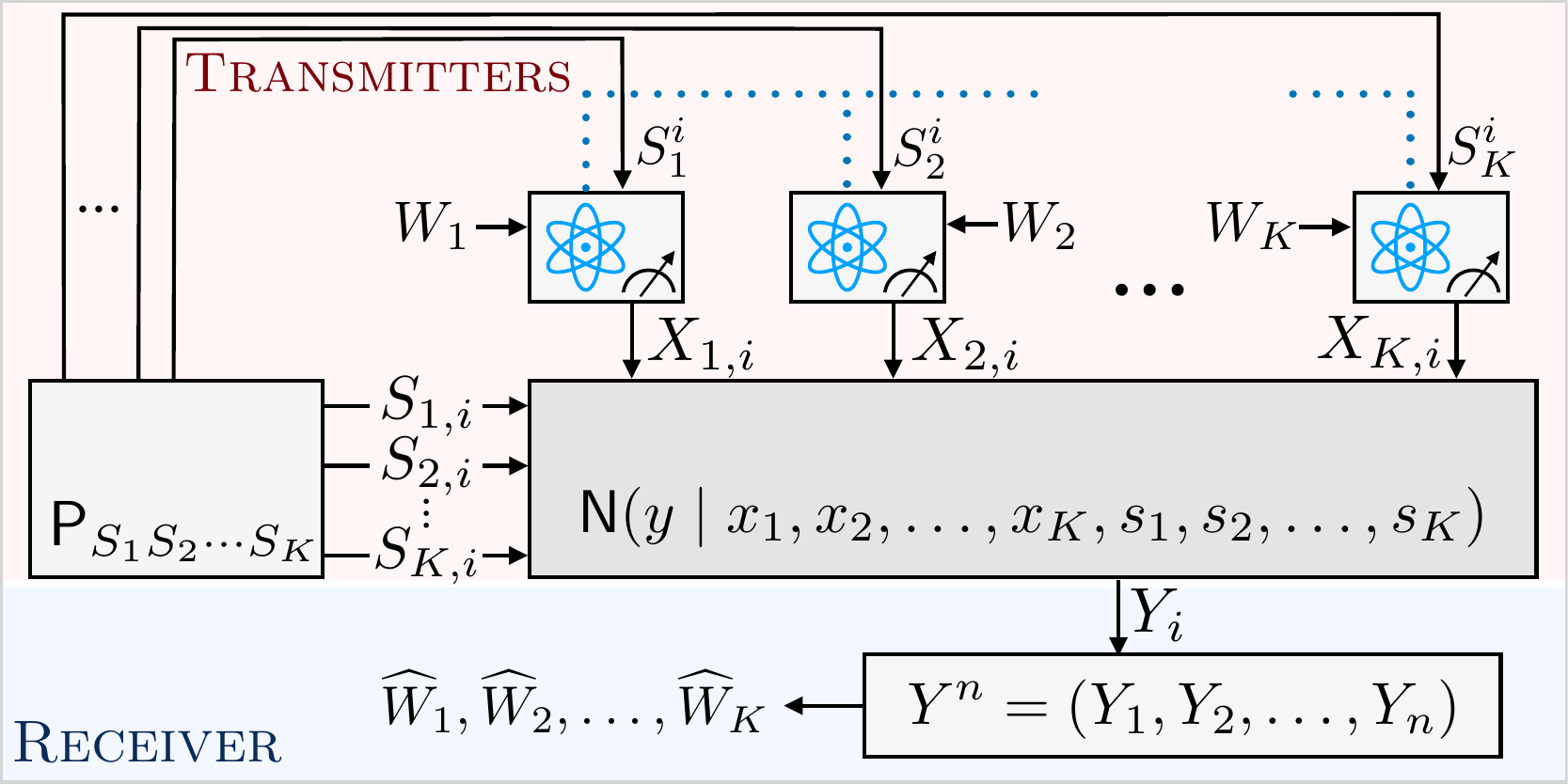}};
	\node[align=center, text={rgb,255:red,43;green,118;blue,182}] at (2.6,2) {Entangled \\ systems};
	\node[align=center] at (-4.15,-0.35) {Channel \\[-2pt] states};
	\node[align=center] at (1.7,-0.3) {State-dependent classical MAC};
\end{tikzpicture}
\caption{Transmitter-side quantum entanglement-assisted classical MAC with causal CSIT. For $k\in \{1,2,\ldots, K\}$, $W_k$ is an independent message that originates at Transmitter $k$. The transmitters have access to pre-shared  quantum entanglement. For the $i^{th}$ use of the channel, Transmitter $k$ conducts a measurement which depends on $W_k$ and the channel states $S_{k}^i = (S_{k,1},S_{k,2},\ldots, S_{k,i})$, and sends the classical measurement outcome $X_{k,i}$ to the channel. After $n$ channel uses, the receiver collects $Y^n=(Y_1,Y_2,\ldots, Y_n)$, from which it decodes the $K$ messages as $(\widehat{W}_1,\widehat{W}_2,\ldots, \widehat{W}_K)$. A formal description of the coding scheme is presented in Section \ref{sec:TxEAcoding}.}
\label{fig:QEAcoding}
\end{figure}

The main contribution of this work is the discovery of \emph{exponential}, 
\emph{unbounded} and \emph{robust}  gains in capacity for various classical $K$-user multiple access channels with causal CSIT, due to quantum entanglement assistance. It is shown (Proposition \ref{prop:ChannelB1}) that in the presence of causal CSIT, quantum entanglement assistance provides a multiplicative capacity advantage that grows exponentially with the number of users $K$  for certain classical $K$-user multiple access channels with fixed size (binary) alphabet for inputs, outputs and states. Similarly, for certain multiple access channels, in the presence of causal CSIT, quantum entanglement assistance is shown (Proposition \ref{prop:ClassC}) to provide a multiplicative capacity advantage that is unbounded as the state alphabet size grows, while the number of users $(K=3)$ and the input and output alphabet (binary) are held fixed.
 
For these gains in capacity, it is sufficient to have entanglement assistance only at the transmitters, as illustrated in Figure \ref{fig:QEAcoding}. For example, one of the channels that we consider (see Definition \ref{def:MerminGHZ_channel}) is a classical binary additive MAC with a state-dependent additive interference term,
\begin{align}
Y=X_1\oplus X_2\oplus\dots\oplus X_K\oplus Z(S_1,S_2,\ldots,S_K).\label{eq:intmit}
\end{align}  Here $S_k$ and $X_k$ denote, respectively, the channel state information symbol and channel input symbol  corresponding to Transmitter $k$. $Y$ is the channel output,  $Z(S_1,S_2,\ldots, S_K)$ is the state-dependent interference observed at the receiver, and $\oplus$ is binary addition (all symbols are binary). This is a generalization,  from a point-to-point setting to a MAC, of the channel model of \cite{Erez_and_Zamir}. Similar models have also been formulated in the studies of wireless networks \cite{Gesbert_dist_CSIT}. For the particular interference structure chosen in our example (inspired by the Mermin-GHZ game \cite{merminGHZ}), we show that the sum-capacity with entanglement assistance provided only to the transmitters is $1$ (bit/channel-use), whereas the capacity without entanglement assistance is only $O(2^{-K})$, decaying exponentially in the number of users $K$.

Given the claims of exponential or unbounded gains, it is natural to ask whether this advantage remains significant in finite-sized regimes, or if constant pre-factors make it practically inaccessible. The answer turns out to be surprisingly positive. Even with a moderate number of users and small alphabet sizes, the gains are substantial. For instance, in the setting of \eqref{eq:intmit}, we observe (Figure \ref{fig:gain}) multiplicative capacity gains exceeding factors of $21$ and $88$ for $K = 5$ and $K = 7$ users, respectively. 

An intuitive insight into the large quantum advantage can be found in the challenge of \emph{distributed interference suppression} that is apparent in  problem formulation \eqref{eq:intmit} through the inclusion of $Z$. Efficient channel utilization requires mitigating interference $Z$ as much as possible based on the available CSIT, while also simultaneously using the channel for communication. How well the interference can be mitigated by distributed transmitters depends on how well the structure of the channel matches the structure of the interference (i.e., how the channel state depends on the CSIT). For example, with a  linear channel as in \eqref{eq:intmit}, a linear interference term may be efficiently neutralized, but a non-linear $Z$ can be much more challenging to suppress for classical coding schemes.
 On the other hand, it is well-known that quantum entanglement assistance, utilizing non-local correlations, can efficiently map certain non-linear distributed computations to linear ones \cite{clauser1969proposed,ambainis2012quantum,ambainis2012advantage,ambainis2013provable,Briet_and_Vidick}. This allows the quantum-assisted transmitters to suppress the non-linear interference $Z$ to a greater extent, with much greater efficiency, while simultaneously communicating (in superposition) over the linear channel, all of which significantly amplifies the capacity advantage. It is also worth noting that while the magnitude of the gain depends on the structure of interference $Z$, even generic structures are found to retain significant advantages from quantum entanglement assistance (see Propositions \ref{prop:ClassA} and \ref{prop:ClassB}).
 
To further understand the significance of CSIT, consider the alternative --- a discrete memoryless MAC  \emph{without} CSIT. Since any channel state information at the receiver can be simply treated as a channel output, a MAC without CSIT can be reduced to a MAC \emph{without state}. A trivial classical coding scheme for such a channel is a time-sharing of $K$ schemes, each optimized for a particular user, so that the sum-rate of the scheme is $1/K$ times the sum of the individual maximum rates achievable by the $K$ users. Recalling that quantum entanglement assistance cannot increase the capacity of a single-user channel \cite{Bennett_Shor_Smolin_Thapliyal_PRL}, it is perhaps intuitively to be expected that quantum-assisted coding schemes may not achieve a multiplicative capacity advantage greater than a factor of $K$. Indeed, this intuition turns out to be correct. We show (Theorem \ref{thm:MAC_no_state} in Section \ref{sec:MAC_no_state}) that in any $K$-user classical multiple access channel model \emph{without state}, a capacity improvement by a multiplicative factor \emph{greater than $K$ is impossible} from quantum entanglement assistance, even if the assistance is provided to all transmitters \emph{and the receiver} --- in fact an improvement by a factor greater than $K$ is impossible even with \emph{non-signaling} assistance (which subsumes quantum assistance as a special case) provided to all parties. Considering that the capacity improvements found thus far (for MAC without state) have been rather small (not exceeding $6\%$ to our knowledge), a factor of $K$ bound could be quite loose. However, even this potentially loose bound clearly rules out any exponential/unbounded gains for multiple access channel models without state. Thus, the presence of a channel state along with some CSIT, exemplified  in \eqref{eq:intmit} by the interference $Z(S_1,S_2,\cdots, S_K)$ with $S_k$ known to Transmitter $k$, emerges as an essential enabling premise for exponential/unbounded gains in capacity from quantum entanglement assistance.

Finally, another natural question to ask is whether these large gains are robust --- do they only materialize if all quantum systems and quantum operations are \emph{perfect}, or do the gains persist even with noisy systems and operations? Here also, the answer is quite positive. The gains persist even with noisy quantum resources. For example,  the entanglement-assisted coding scheme for channel model \eqref{eq:intmit} which achieves an exponential capacity advantage in the number of users $K$, consumes a $K$ qubit (one qubit per transmitter) pre-shared GHZ state per channel use. We find (see Remark \ref{rem:robust} and Appendix \ref{proof:robust}) that the exponential advantage persists  even if each  qubit independently depolarizes  with probability $\approx$ 30$\%$. 

In summary, the discoveries reported here stem from an intuitive hypothesis that the cascading benefits of entanglement-assisted distributed interference suppression could translate to significant capacity gains, particularly when applied to non-linear interference structures like those identified in the literature \cite{clauser1969proposed,ambainis2012quantum,ambainis2012advantage,ambainis2013provable,Briet_and_Vidick}. The magnitude of the advantage, its grounding in Shannon capacity (rather than more fragile notions such as zero-error capacity), and its resilience to substantial quantum noise collectively paint an optimistic picture. Together, these factors open promising avenues to search for concrete practical applications, even in the current NISQ era.

\section{Problem Formulation} \label{sec:problem}
\subsection{Basic definitions}
$\mathbb{N}$ denotes the set of positive integers. For $n\in\mathbb{N}$, $[n]$ denotes the set $\{1,2,\ldots, n\}$, and $x^n$ denotes the tuple $(x_1,x_2,\ldots, x_n)$. For integers $n_1, n_2$, let $[n_1:n_2]$ denote the set $\{n_1,\dots,n_2\}$ if $n_1\leq n_2$ and the empty set otherwise.
Let $\mathbb{R}$ and $\mathbb{C}$ denote the sets of real and complex numbers, respectively, and let $\mathbb{R}_+$ denote the set of nonnegative real numbers. 
We use $H(X)$ and $H(X\mid Y)$ to denote the entropy and the conditional entropy, respectively, and we use $I(X;Y)$ and $I(X;Y\mid Z)$ to denote the mutual information and conditional mutual information, respectively.
$\Pr(E)$ denotes the probability of an event $E$, and $\mathbb{E}[X]$ denotes the expectation of a random variable $X$. We write $a\oplus_m b \triangleq a+b \pmod m$, and omit the subscript $m$ when $m=2$. We write $g(n)=o(f(n))$ if $\lim_{n\to \infty}g(n)/f(n) = 0$. 
We write $g(n)=O(f(n))$ if there exists some constant $c>0$ such that $|g(n)|\leq c|f(n)|$ for all sufficiently large $n$. We write $h(n) = \Omega(f(n))$ if there exists some constant $c>0$ such that $|h(n)|\geq c|f(n)|$ for all sufficiently large $n$.
We write $H_b(p)$ for the binary entropy function, i.e., $H_b(p)=-p\log_2(p) - (1-p)\log_2 (1-p)$, with the convention that $0\log_2 (0) = 0$. 

Let $\mathcal{H}_Q$ denote the Hilbert space associated with a quantum system $Q$. For a composite quantum system $Q_1Q_2\cdots Q_K$, the associated Hilbert space is given by the tensor product $\mathcal{H}_{Q_1}\otimes \mathcal{H}_{Q_2} \otimes \cdots \otimes \mathcal{H}_{Q_K}$. The spaces of linear operators and density operators on $\mathcal{H}_Q$ are denoted by $\mathcal{L}(\mathcal{H}_Q)$ and $\mathcal{D}(\mathcal{H}_Q)$, respectively. 
We write $\Tr[\cdot]$ for the trace of an operator. 
A quantum instrument from input system $Q$ to output system $Q'$, with classical outcome $X\in \mathcal{X}$, is a collection $\mathcal{E} = \{\mathcal{E}_x\}_{x\in \mathcal{X}}$ of completely positive, trace non-increasing maps $\mathcal{E}_x\colon \mathcal{L}(\mathcal{H}_Q) \to \mathcal{L}(\mathcal{H}_{Q'})$ such that $\sum_{x\in \mathcal{X}}\mathcal{E}_x$ is trace-preserving. 
A positive operator-valued measurement (POVM)  on quantum system $Q$, with classical outcome $X\in \mathcal{X}$ is a collection of positive semi-definite operators $\mathcal{D} = \{D_{x}\}_{x\in \mathcal{X}}$ on $\mathcal{H}_Q$ such that $\sum_{x\in \mathcal{X}} D_x = I$, where $I$ is the identity operator on $\mathcal{H}_Q$.

\subsection{$K$-user state-dependent multiple access channel}
A (discrete memoryless) $K$-user state-dependent multiple access channel $(\mN, \mP_{\!S_1S_2\cdots S_K})$ is defined by the channel conditional distribution $\mN(y\mid x_1,x_2,\ldots, x_K, s_1,s_2,\ldots,s_K)$, and the state distribution $\mP_{\!S_1S_2\cdots S_K}(s_1,s_2,\ldots, s_K)$, for all $y \in \mathcal{Y}$, $(x_1,x_2,\ldots, x_K)\in \mathcal{X}_1\times \mathcal{X}_2 \times \cdots \times \mathcal{X}_K$, and $(s_1,s_2,\ldots, s_K)\in \mathcal{S}_1\times \mathcal{S}_2 \times \cdots \times \mathcal{S}_K$, where $\mathcal{Y}$, $\mathcal{X}_k$, and $\mathcal{S}_k$ are finite alphabets for each $k\in [K]$. 
For each channel use, the channel states $(S_1,S_2,\ldots, S_K)$ are generated according to $\mP_{\!S_1S_2\cdots S_K}$, such that $\Pr(S_k=s_k,\forall k\in [K]) = \mP_{\!S_1S_2\cdots S_K}(s_1,s_2,\ldots, s_K)$. For each $k\in [K]$, the $k^{th}$ transmitter (user) observes $S_k$, and chooses an input $X_k \in \mathcal{X}_k$. The output $Y$ of the channel is a random variable seen by the receiver, such that $\Pr(Y=y\mid X_k =x_k, S_k=s_k, \forall k\in [K]) = \mN(y\mid x_1,x_2,\ldots, x_K,s_1,s_2,\ldots, s_K)$.

For $n\in \mathbb{N}$ uses of the channel, we use $S_{k,i}, X_{k,i}$, and $Y_i$ to denote the state $S_k$, input $X_k$, and the output $Y$ associated with the $i^{th}$ use of the channel. The states across different channel uses are independent, i.e., $\Pr(S_k^{n}=s_k^n,\forall k\in [K])=\prod_{i=1}^n \Pr(S_{k,i}=s_{k,i},\forall k)$. The conditional probability of the outputs $\Pr(Y^n=y^n\mid X_k^n=x_k^n, S_k^n=s_k^n,\forall k\in [K]) = \prod_{i=1}^n\Pr(Y_i=y_i\mid X_{k,i}=x_{k,i}, S_{k,i}=s_{k,i},\forall k\in [K])$.

In the following we formalize two classes of coding schemes, namely, the classical coding schemes, and the entanglement-assisted coding schemes, both using causal CSIT.
We then define the achievable rates and the capacities associated with the two classes of coding schemes.

\subsection{Classical coding schemes with causal CSIT}
In a classical coding scheme, the $K$ transmitters and the receiver are allowed access to a shared random variable $J$.
For $M_1,M_2,\ldots, M_K, n \in \mathbb{N}$, a classical $(M_1,M_2,\ldots, M_K,n)$ coding scheme with causal CSIT is specified by a tuple 
\begin{equation*}
	\Big(\mP_{\!J}, \big((\phi_{1}^{(i)}, \phi_{2}^{(i)},\ldots, \phi_{K}^{(i)} )\big)_{i\in [n]},  \psi\Big).
\end{equation*}
Here, $\mP_{\!J}$ denotes the distribution of the shared random variable $J$, whose support is $\mathcal{J}$.
For each $k\in [K]$, Transmitter $k$ sends a message of size $M_k$ by using the channel $n$ times. 
Specifically, for each $k\in [K]$, an independent message $W_k$, uniformly distributed in $[M_k]$, originates at Transmitter $k$. The messages are generated independently of the shared random variable $J$.
For $i=1,2,\ldots, n$, and for each $k\in [K]$, $\phi_{k}^{(i)} \colon [M_k] \times (\mathcal{S}_k)^i \times \mathcal{J} \to \mathcal{X}_k$ specifies the encoder at Transmitter $k$ for the $i^{th}$ channel use, such that $X_{k,i} = \phi_{k}^{(i)}(W_k, S_k^i, J)$. At the receiver, $\psi\colon \mathcal{Y}^n\times \mathcal{J} \to [M_1]\times [M_2] \times \cdots \times [M_K]$ specifies the decoder, with decoded messages $(\widehat{W}_1,\widehat{W}_2,\ldots, \widehat{W}_K) = \psi(Y^n,J)$.

Let us define $\mathfrak{F}^{\C}(M_1,M_2,\ldots, M_K,n)$ as the set of all classical $(M_1,M_2,\ldots, M_K,n)$ coding schemes with causal CSIT.

\subsection{Transmitter-side entanglement-assisted coding schemes with causal CSIT} \label{sec:TxEAcoding}
With entanglement-assistance and causal CSIT available to the transmitters, an $(M_1,M_2,\ldots, M_K,n)$ coding scheme is specified by a tuple 
\begin{equation*}
	\Big(\rho,\big((\mathcal{E}^{(1,i)}, \mathcal{E}^{(2,i)},\ldots, \mathcal{E}^{(K,i)})\big)_{i\in [n]}, \psi \Big).
\end{equation*}
Such a scheme allows the $K$ transmitters to share in advance a $K$-partite (entangled) quantum system $Q_1Q_2\cdots Q_K$ in the initial state $\rho\in \mathcal{D}(\mathcal{H}_{Q_1}\otimes \mathcal{H}_{Q_2}\otimes \cdots \otimes \mathcal{H}_{Q_K})$ that is independent of the messages $(W_1,W_2,\ldots, W_K)$. For each $k\in [K]$, $Q_k$ denotes the quantum system with Transmitter $k$.
The scheme works as follows. For $i=1,2,\ldots,n$, at the $i^{th}$ channel use, and for each $k\in [K]$, conditioned on $(W_k=w_k, S_k^{i}=s_k^{i}, X_k^{i-1}=x_k^{i-1})$, Transmitter $k$ applies the encoding quantum instrument $\mathcal{E}^{(k,i)} = \Big\{\mathcal{E}^{(k,i)}_{x_{k,i}\mid (w_k,s_k^{i},x_k^{i-1})}\Big\}_{x_{k,i}\in \mathcal{X}_k}$, and inputs the classical outcome $X_{k,i}$ into the channel.
After $n$ uses of the channel, the receiver collects the channel outputs $Y^n=(Y_1,Y_2,\ldots, Y_n)$ and applies the decoder $\psi\colon \mathcal{Y}^n \to [M_1]\times [M_2] \times \cdots \times [M_K]$ to produce the decoded messages $(\widehat{W}_1,\widehat{W}_2,\ldots, \widehat{W}_K) = \psi(Y^n)$.
 
Let us define $\mathfrak{F}^{\EA}(M_1,M_2,\ldots, M_K,n)$ as the set of all transmitter-side entanglement-assisted $(M_1,M_2,\ldots, M_K,n)$ coding schemes with causal CSIT. 

\begin{remark}
	Although not considered in this work, an \emph{all-party} entanglement-assisted scheme with causal CSIT may include an additional entangled quantum system accessible to the receiver (decoder). Such schemes are studied for classical point-to-point channels with state in \cite{Yao_Jafar_QEACWS}. 
\end{remark}

Henceforth, we omit the phrases ``with causal CSIT," and ``transmitter-side," as they are the default assumptions throughout the paper.

\subsection{Probability of error, achievable rate and capacity} \label{sec:prob_error_and_capacity}
Given a coding scheme $\Gamma$, the probability of error is defined as 
$P_e(\Gamma) = \Pr\big( (\widehat{W}_1,\ldots,\widehat{W}_K)\neq (W_1,\ldots, W_K) \big)$. The explicit steps to find this value are derived in Appendix \ref{sec:prob_error}.
We remark that, classical shared randomness between the transmitters and the receiver  cannot help to improve the optimal probability of (decoding) error for classical coding schemes (see Remark \ref{rem:randomness} in Appendix \ref{sec:prob_error}).

A rate tuple $(R_1,\ldots, R_K) \in \mathbb{R}_+^K$ is classically  achievable if there exists a sequence of coding schemes $\Gamma_n \in \mathfrak{F}^{\C}(M_{1,n}, \ldots, M_{K,n},n)$ such that $\lim_{n\to \infty} P_e(\Gamma_n) = 0$ and $\lim_{n\to \infty}\log_2(M_{k,n})/n \geq R_k, \forall k\in [K]$.
Similarly, $(R_1,\ldots, R_K) \in \mathbb{R}_+^K$ is achievable with entanglement assistance (EA) if the same conditions hold for a sequence of coding schemes $\Gamma_n \in \mathfrak{F}^{\EA}(M_{1,n}, \ldots, M_{K,n},n)$.

The classical (sum-rate) capacity $C^{\C}$ is defined as $$\sup\{R_1+\cdots +R_K\colon (R_1,\ldots, R_K)~ \mbox{is classically achievable}\}.$$ The entanglement-assisted capacity is defined as $$\sup\{R_1+\cdots +R_K\colon (R_1,\ldots, R_K)~ \mbox{is achievable with EA}\}.$$

\section{Results}
\subsection{Additive MAC with state-dependent interference}
The main results of this paper are developed for the \emph{additive MAC with state-dependent interference} channel model, defined as follows. 
\begin{definition}[Additive MAC with state-dependent interference] \label{def:additive_MAC}
	Let $m\geq 2$ be an integer, and $\mathcal{X}_k=\mathcal{Y} = [0:m-1],\forall k\in [K]$. The channel states $(S_1,\ldots, S_K)$ are generated according to $\mP_{\!S_1\cdots S_K}$.  $Z\in[0:m-1]$ is a random variable  and its distribution is determined by the channel states $(S_1,\ldots, S_K)$, i.e., $\Pr(Z=z\mid S^K=s^K, X^K=x^K) = \Pr(Z=z\mid S^K=s^K)$. The additive MAC with state-dependent interference is then defined by
	\begin{align}
		Y = X_1\oplus_m X_2 \oplus_m \cdots \oplus_m  X_K \oplus_m Z.
	\end{align}
\end{definition}
\begin{remark}
	For  $K=1$ in Definition \ref{def:additive_MAC}, the classical channel capacity is found by Erez and Zamir in \cite{Erez_and_Zamir}. A similar form of the $K$-user MAC (which also includes an additional state output at the receiver) is investigated in \cite[Sec. IV.B]{Yao_Jafar_NS_DoF} when  non-signaling assistance is allowed. In general, however, the capacity of a MAC with causal CSIT is still open \cite{lapidothDoubleState}.
\end{remark}

The goal of this work is to identify the capacity gain from entanglement assistance. To this end, the following theorem characterizes the classical capacity for any additive MAC with state-dependent interference.
\begin{theorem}[Classical capacity] \label{thm:additive_MACs_classical}
	For any additive MAC with state-dependent interference, as defined in Definition \ref{def:additive_MAC},
	\begin{align}
		C^{\C} &= \log_2(m) - H_{\min},\\
		H_{\min} &\triangleq \min_{t_1,\ldots, t_K} H(t_1(S_1)\oplus_m \cdots \oplus_m t_K(S_K)\oplus_m Z),\label{eq:min_obj}
	\end{align}
	and the minimization is over the functions $t_k\colon \mathcal{S}_k \to \mathcal{X}_k, \forall k$.
\end{theorem}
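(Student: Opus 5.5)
Achievability and converse are handled separately.

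\emph{Achievability.} Fix minimizers $t_1^*,\ldots,t_K^*$ of \eqref{eq:min_obj}. Let $V\triangleq t_1^*(S_1)\oplus_m\cdots\oplus_m t_K^*(S_K)\oplus_m Z$. The $V_i$ are i.i.d. across channel uses, because the states are i.i.d. and $Z$ is drawn per use given the states. Only Transmitter~1 carries information. It sends $X_{1,i}=U_i\oplus_m t_1^*(S_{1,i})$, where $U^n$ is a codeword, and every other transmitter $k\ge 2$ sends $X_{k,i}=t_k^*(S_{k,i})$. This uses only the current state, so it is causal. The receiver then sees $Y_i=U_i\oplus_m V_i$, a point-to-point additive noise channel over $\mathbb{Z}_m$ with i.i.d. noise $V$ independent of $U$. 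Its capacity $\log_2 m-H(V)$ is achieved with uniform inputs by standard random coding, which gives sum-rate $\log_2 m-H_{\min}$. This is the Erez--Zamir/Shannon-strategy argument.

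\emph{Converse.} Take any classical scheme with shared randomness $J$. By Remark~\ref{rem:randomness}, or by simply keeping $J$ in all conditionings, we may work with the scheme as given. Let $W=(W_1,\ldots,W_K)$. Fano gives $n(R_1+\cdots+R_K)-n\epsilon_n\le I(W;Y^n\mid J)\le \sum_i [H(Y_i)-H(Y_i\mid Y^{i-1},W,J)]$, and each $H(Y_i)\le\log_2 m$.

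The key step is to lower bound $H(Y_i\mid Y^{i-1},W,J)$ by $H_{\min}$. Condition additionally on $A_i\triangleq(W,J,S_1^{i-1},\ldots,S_K^{i-1})$; this only decreases entropy, and $Y^{i-1}$ is a function of $A_i$ together with $Z^{i-1}$. To keep the conditioning clean, I would instead condition on $B_i=(W,J,S^{K,i-1},Z^{i-1})$, which determines $Y^{i-1}$. Then $H(Y_i\mid Y^{i-1},W,J)\ge H(Y_i\mid B_i)$.

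Given $B_i=b$, the input is $X_{k,i}=\phi_k^{(i)}(w_k,s_k^{i-1},S_{k,i},j)=:t_k^{b}(S_{k,i})$, a deterministic function of the current state $S_{k,i}$ alone. Moreover, $(S_{1,i},\ldots,S_{K,i},Z_i)$ is independent of $B_i$, with the same joint law as $(S_1,\ldots,S_K,Z)$. This holds because the messages and $J$ are independent of the states, the states are i.i.d. across uses, and $Z_i$ depends only on the current states. Hence $H(Y_i\mid B_i=b)=H(t_1^b(S_1)\oplus_m\cdots\oplus_m t_K^b(S_K)\oplus_m Z)\ge H_{\min}$. Summing over $i$ and letting $n\to\infty$ yields $C^{\C}\le\log_2 m-H_{\min}$.

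The main obstacle is this conditioning step: we must choose a conditioning set that contains the past outputs and makes each encoder a function of its current state only, while leaving the current state/noise tuple independent of it. Including the past states and noise in $B_i$ achieves both, and the memoryless i.i.d. structure justifies the independence.
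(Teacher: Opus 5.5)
Your proposal is correct and follows essentially the paper's route. Achievability uses the channel conversion $X_k = X_k'\oplus_m t_k(S_k)$ with a single data-carrying user. The converse uses Fano's inequality plus conditioning on the past states, so that for each fixed conditioning value every encoder becomes a function $t_k$ of the current state, while $(S_{1,i},\ldots,S_{K,i},Z_i)$ stays independent with its original law. The only cosmetic difference is that you add $Z^{i-1}$ to the conditioning to make $Y^{i-1}$ a deterministic function of it, whereas the paper conditions on $(W,Y^{i-1},S^{i-1},J)$ and then drops $Y^{i-1}$ via the Markov chain $Y^{i-1}\leftrightarrow(W,S^{i-1},J)\leftrightarrow Y_i$.
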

\noindent The proof of Theorem \ref{thm:additive_MACs_classical} is presented in Appendix \ref{proof:additive_MACs_classical}. Theorem \ref{thm:additive_MACs_classical} can be viewed as an extension of \cite[Thm. 1]{Erez_and_Zamir} to the MAC setting. The converse follows from a standard application of Fano's inequality. The achievability can be interpreted as the following \emph{channel conversion} strategy which tries to suppress the interference over each channel use. Specifically, for each $k\in [K]$, Transmitter $k$ sends $X_k = X_k'\oplus_m t_k(S_k)$ for some $X_k'\in \mathcal{X}_k$. The receiver sees $Y = X_1'\oplus_m \cdots \oplus_m X_K'\oplus (t_1(S_1) \oplus_m \cdots \oplus_m t_K(S_K) \oplus Z)$. The term $(t_1(S_1) \oplus_m \cdots \oplus_m t_K(S_K) \oplus_m Z)$ is then viewed as the \emph{suppressed} interference, which has entropy $H_{\min}$ when minimized over all deterministic functions $t_k \colon \mathcal{S}_k \to \mathcal{X}_k, \forall k$. By coding over the converted channel $(X_1',\ldots, X_K') \to Y$, the sum-rate $\log_2(m) - H_{\min}$ is achievable. Note that the channel conversion idea only uses CSIT of the current channel, and therefore it only needs causal CSIT. We also observe that when $H_{\min} = H(Z)$, the classical capacity with causal CSIT coincides with the classical capacity without CSIT.

We next extend the channel conversion idea to incorporate transmitter-side entanglement assistance. This is formalized in Theorem \ref{thm:EAIS}.
\begin{theorem}[Entanglement-assisted distributed interference suppression] \label{thm:EAIS}
	Let $\rho\in \mathcal{D}(\mathcal{H}_{Q_1}\otimes \cdots \otimes \mathcal{H}_{Q_K})$ be any (possibly) entangled state of any $K$-partite composite quantum system $Q_1\cdots Q_K$, and let $\mathcal{F}^{(k)}_{s_k} = \big\{F^{(k)}_{b_k \mid s_k}\big\}_{b_k\in [0:m-1]}$ be any POVM on $Q_k$ which has classical output defined in $[0:m-1]$, $\forall k\in [K], s_k\in \mathcal{S}_k$. Let $B_k$ denote the measurement outcome when $\mathcal{F}_{S_k}^{(k)}$ is applied to $Q_k,\forall k$. Then the following rate $R^{\EA}$ is achievable with transmitter-side entanglement-assisted schemes.
	\begin{align}
		R^{\EA} &= \log_2(m) - H(B_1\oplus_m \cdots \oplus_m B_K \oplus_m Z).
	\end{align}
\end{theorem}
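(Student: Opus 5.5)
We mimic the classical channel-conversion argument behind Theorem \ref{thm:additive_MACs_classical}, replacing the deterministic functions $t_k(S_k)$ by the measurement outcomes $B_k$. The transmitters share $n$ independent copies $\rho^{\otimes n}$, one copy $Q_{1,i}\cdots Q_{K,i}$ for each channel use $i\in[n]$. At channel use $i$, Transmitter $k$ first applies the POVM $\mathcal{F}^{(k)}_{S_{k,i}}$ to its share $Q_{k,i}$, which it can do because it knows the current state $S_{k,i}$ (causal CSIT). Call the outcome $B_{k,i}$. It then transmits
\begin{align*}
X_{k,i}=X'_{k,i}\oplus_m B_{k,i},
\end{align*}
where $X'_{k,i}$ is the $i$-th symbol of a classical codeword for $W_k$. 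This is a legitimate quantum instrument in the sense of Section \ref{sec:TxEAcoding}, since it is a measurement followed by classical post-processing that depends on $(w_k,s_k^i)$. The receiver sees
\begin{align*}
Y_i=X'_{1,i}\oplus_m\cdots\oplus_m X'_{K,i}\oplus_m \tilde Z_i,\qquad \tilde Z_i\triangleq B_{1,i}\oplus_m\cdots\oplus_m B_{K,i}\oplus_m Z_i.
\end{align*}

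The key step is to show that the induced channel from $(X'_1,\ldots,X'_K)$ to $Y$ is a memoryless classical additive MAC without state. Its noise $\tilde Z$ must be i.i.d. across channel uses, have the distribution of $B_1\oplus_m\cdots\oplus_m B_K\oplus_m Z$ from the theorem statement, and be independent of the codeword symbols.

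\begin{itemize}
\item \emph{Per-use distribution.} For each $i$, the joint law of $(S_{1,i},\ldots,S_{K,i},B_{1,i},\ldots,B_{K,i},Z_i)$ is
\begin{align*}
\mP_{\!S^K}(s^K)\Tr\bigl[(F^{(1)}_{b_1\mid s_1}\otimes\cdots\otimes F^{(K)}_{b_K\mid s_K})\rho\bigr]\Pr(z\mid s^K).
\end{align*}
\item \emph{Independence across uses.} The states are i.i.d. across uses, the copies of $\rho$ are in a product state, and the measurements act locally on distinct copies. Hence the tuples are independent across $i$.
\item \emph{Independence from the messages.} The codeword symbols $X'_{k,i}$ depend only on $W_k$, which is independent of the states, of $Z$, and of $\rho$. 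The measurement outcomes do not depend on $X'$.
\end{itemize}

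Given this, the converted channel is the stateless additive MAC $Y=X'_1\oplus_m\cdots\oplus_m X'_K\oplus_m\tilde Z$ with i.i.d. noise. Its sum-rate $\log_2(m)-H(\tilde Z)$ is achievable by standard random coding. For example, let only one user transmit with a uniform input, which gives point-to-point capacity $\log_2 m-H(\tilde Z)$ for an additive noise channel, while the others send $X'_k=0$. Alternatively, use i.i.d. uniform inputs for all users and invoke the standard MAC achievability, which gives total rate $I(X'_1\cdots X'_K;Y)=\log_2 m-H(\tilde Z)$. Note that the receiver decodes classically from $Y^n$ alone, as required.

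The main obstacle is making the independence claims rigorous. Specifically, the per-use effective noise is i.i.d. and independent of the messages even though the measurement choice depends on $S_{k,i}$, which is correlated with $Z_i$. Writing the joint distribution as the product over $i$ of the per-use expression above, and checking that $W^K$ factors out, settles it.
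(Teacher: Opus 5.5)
Your proposal is correct and follows the paper's proof. Both use a fresh copy of $\rho$ for each channel use, apply the state-dependent local measurement, and send $X_k=X_k'\oplus_m B_k$, which turns the channel into a stateless additive MAC with noise $B_1\oplus_m\cdots\oplus_m B_K\oplus_m Z$; a single user then codes with a uniform input, and TDMA gives the sum rate. Your explicit checks that this effective noise is i.i.d. across uses and independent of the messages fill in steps the paper leaves implicit.
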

\noindent The proof is presented in Appendix \ref{proof:EAIS}.

\subsection{2-user cases}
As an introductory example to see how entanglement assistance can improve the capacity with causal CSIT, let us consider the cases with $K=2$ users, and present the following example, namely \textit{Channel A1}.
\begin{definition}[Channel A1] \label{def:ChannelA1}
	Channel A1 is an additive MAC with state-dependent interference. It has $K=2$ users and $m=2$. The state alphabets are $\mathcal{S}_1=\mathcal{S}_2 = \{0,1\}$. The states $S_1$ and $S_2$ are drawn independently and uniformly from $\mathcal{S}_1$ and $\mathcal{S}_2$, respectively. The state-dependent interference $Z= S_1  S_2$, i.e., $Z=1$ if and only if $S_1=1$ and $S_2=1$. 
\end{definition}
\noindent We have the following proposition regarding Channel A1.
\begin{proposition}[Channel A1] \label{prop:ChannelA1}
	For Channel A1, the classical capacity is $C^{\C} = 1-H_b(3/4) \approx 0.1887$, and its entanglement-assisted capacity $C^{\EA} \geq 1-H_b(\tfrac{2+\sqrt{2}}{4}) \approx 0.3991$. Thus, the multiplicative capacity gain via entanglement assistance for Channel A1 is at least $0.3991/0.1887 \approx 2.1$.  Fig. \ref{fig:Hbp_complement} illustrates $1-H_b(p)$.
\end{proposition}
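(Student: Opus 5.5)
The plan has two parts: compute $C^{\C}$ exactly via Theorem \ref{thm:additive_MACs_classical}, and lower-bound $C^{\EA}$ by applying Theorem \ref{thm:EAIS} to the standard CHSH strategy.

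\textbf{Classical capacity.} By Theorem \ref{thm:additive_MACs_classical} with $m=2$, $C^{\C}=1-H_{\min}$. Here $H_{\min}$ minimizes $H(t_1(S_1)\oplus t_2(S_2)\oplus S_1S_2)$ over functions $t_1,t_2\colon\{0,1\}\to\{0,1\}$. There are only $16$ pairs $(t_1,t_2)$, and the random variable $V=t_1(S_1)\oplus t_2(S_2)\oplus S_1S_2$ is binary, so $H(V)=H_b(\Pr(V=0))$.

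\begin{itemize}
\item The key observation is the CHSH bound: for any deterministic $t_1,t_2$, the event $t_1(S_1)\oplus t_2(S_2)=S_1S_2$ holds for at most $3$ of the $4$ equally likely state pairs. This is the classical winning probability $3/4$ of the CHSH game.
\item It also fails for at least $1$ of the $4$ pairs, by the same parity argument applied to $S_1S_2\oplus 1$.
\item Hence $\Pr(V=0)\in[1/4,3/4]$ for every choice. Since $H_b$ is minimized on this interval at its endpoints, $H_{\min}\ge H_b(3/4)$.
\item The bound is attained by $t_1=t_2\equiv 0$, which gives $\Pr(V=0)=\Pr(S_1S_2=0)=3/4$.
\end{itemize}

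Therefore $C^{\C}=1-H_b(3/4)$. For the stated numerical value, evaluating $H_b(3/4)\approx 0.8113$ gives $C^{\C}\approx 0.1887$.

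\textbf{Entanglement-assisted lower bound.} Apply Theorem \ref{thm:EAIS} with $\rho=|\Phi^+\rangle\langle\Phi^+|$, where $|\Phi^+\rangle=(|00\rangle+|11\rangle)/\sqrt2$. Each transmitter $k$ uses the projective measurement in the standard CHSH basis selected by $s_k$:
\begin{itemize}
\item Transmitter 1 measures at angle $0$ for $s_1=0$ and at angle $\pi/4$ for $s_1=1$.
\item Transmitter 2 measures at angle $\pi/8$ for $s_2=0$ and at angle $-\pi/8$ for $s_2=1$.
\item Angles are in the real plane, i.e., the measurement basis is $\{\cos\theta|0\rangle+\sin\theta|1\rangle,\ -\sin\theta|0\rangle+\cos\theta|1\rangle\}$.
\end{itemize}

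For the state $|\Phi^+\rangle$ and real bases at angles $\theta_1,\theta_2$, one has $\Pr(B_1= B_2)=\cos^2(\theta_1-\theta_2)$. With the choices above:
\begin{itemize}
\item For $(s_1,s_2)\neq(1,1)$, the angle difference is $\pi/8$, so $\Pr(B_1\oplus B_2=0)=\cos^2(\pi/8)$.
\item For $(s_1,s_2)=(1,1)$, the difference is $3\pi/8$, so $\Pr(B_1\oplus B_2=1)=\sin^2(3\pi/8)=\cos^2(\pi/8)$.
\end{itemize}

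Thus, conditioned on every state pair, $B_1\oplus B_2\oplus Z=0$ with probability $\cos^2(\pi/8)=\frac{2+\sqrt2}{4}$. Unconditionally, $B_1\oplus B_2\oplus Z$ is Bernoulli with $\Pr(B_1\oplus B_2\oplus Z=1)=\frac{2-\sqrt2}{4}$, so its entropy is $H_b(\frac{2+\sqrt2}{4})$. Theorem \ref{thm:EAIS} then gives $C^{\EA}\ge 1-H_b(\frac{2+\sqrt2}{4})\approx0.3991$. The multiplicative gain of about $2.1$ follows by division.

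\textbf{Main obstacle.} The only genuinely nontrivial step is the classical optimality. It relies on the deterministic-strategy CHSH bound, which would otherwise require checking all $16$ function pairs. The rest is routine given Theorems \ref{thm:additive_MACs_classical} and \ref{thm:EAIS}.
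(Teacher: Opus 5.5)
Your proposal is correct and follows essentially the paper's route: Theorem~\ref{thm:additive_MACs_classical} with the CHSH parity bound for $C^{\C}$, and the CHSH strategy on $\ket{\Phi^+}$ fed into the channel-conversion argument of Theorem~\ref{thm:EAIS} for $C^{\EA}$. Two minor differences: the paper observes that $\bigoplus_{s_1,s_2}\big(t_1(s_1)\oplus t_2(s_2)\oplus s_1s_2\big)=1$, which gives $\Pr(V=0)\in\{\tfrac14,\tfrac34\}$ for every $(t_1,t_2)$ (a slightly sharper form of your interval bound), and your real rotated measurement bases are an equivalent realization of the paper's phase rotations followed by a computational-basis measurement. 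One wording fix: your second bullet should say that $V=0$ \emph{holds} for at least one of the four pairs (equivalently, that the complementary event fails for at least one); this is what the parity argument applied to $S_1S_2\oplus 1$ actually gives, and it is what you need for $\Pr(V=0)\ge \tfrac14$.
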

\begin{figure}[htbp]
\center
\begin{tikzpicture}
\begin{axis}[
    width=8cm,
    height=5.5cm,
    axis lines=left,
    xmin=0.5, xmax=1.05,
    ymin=0, ymax=1.1,
    xlabel={$p$},
    ylabel={},
    xlabel style={
        at={(axis description cs:1.02,0.04)},
        anchor=west
    },
    xtick={0.5,0.75,0.8536,1},
    xticklabels={$0.5$,$\frac{3}{4}$, $\frac{2+\sqrt{2}}{4}$, $1$},
    ytick={0,0.1887,0.3991,1},
    yticklabels={$0$, $0.1887$, $0.3991$, $1$},
    samples=200,
    domain=0.5:1,
    clip=false,
]

% Function 1 - H_b(p)
\addplot[
    thick,
    black
]
{1 + x*log2(x) + (1-x)*log2(1-x)};
\node[above left] at (axis cs:0.92,0.60) {$1-H_b(p)$};

% Point at p = 3/4
\addplot[
    only marks,
    mark=*,
    mark size=2pt,
    black
]
coordinates {(0.75,0.1887)};
\addplot[
    dashed,
    gray
]
coordinates {(0.5,0.1887) (3/4,0.1887)};
\addplot[
    dashed,
    gray
]
coordinates {(3/4,0) (3/4,0.1887)};

% Point at p = 1/2 + sqrt(2)/4
\addplot[
    only marks,
    mark=*,
    mark size=2pt,
    black
]
coordinates {({1/2 + sqrt(2)/4},0.3991)};
\addplot[
    dashed,
    gray
]
coordinates {({1/2 + sqrt(2)/4},0) ({1/2 + sqrt(2)/4},0.3991)};
\addplot[
    dashed,
    gray
]
coordinates {(0.5,0.3991) ({1/2 + sqrt(2)/4},0.3991)};

% Point at p = 1
\addplot[
    only marks,
    mark=*,
    mark size=2pt,
    black
]
coordinates {(1,1)};

\addplot[
    dashed,
    gray
]
coordinates {(1,0) (1,1)};
\addplot[
    dashed,
    gray
]
coordinates {(0.5,1) (1,1)};

\end{axis}
\end{tikzpicture}
\caption{Plot of $1-H_b(p)$ for $p\in [0.5,1]$. Note that $H_b(p) = H_b(1-p)$ for $p\in [0,1]$.} 
\label{fig:Hbp_complement}
\end{figure}

\begin{proof}[Proof of Proposition \ref{prop:ChannelA1}]
	First, let us apply Theorem \ref{thm:additive_MACs_classical} to obtain $C^{\C}$. For any fixed functions $t_1\colon \{0,1\}\to \{0,1\}$ and $t_2\colon \{0,1\}\to \{0,1\}$, let $V_{t_1,t_2} \triangleq t_1(S_1)\oplus t_2(S_2)\oplus S_1S_2$ be a binary random variable distributed in $\{0, 1\}$. 
	We note that minimizing $H(V_{t_1,t_2})$ over functions $(t_1,t_2)$ corresponds to  maximizing the probability $\Pr(V_{t_1,t_2}=0)$.
	Let $\Sigma_0 = \{(s_1,s_2)\in \{0,1\}^2\colon t_1(s_1)\oplus t_2(s_2) \oplus s_1s_2 = 0\}$. It can be verified that $\bigoplus_{s_1,s_2} (t_1(s_1)\oplus t_2(s_2)\oplus s_1s_2) = 1$, regardless of $(t_1,t_2)$. It follows that $|\Sigma_0| \in \{1,3\}$. Since $(S_1,S_2)$ are uniformly distributed, $\Pr(V_{t_1,t_2}=0) = \tfrac{1}{4}|\Sigma_0| \in \{\tfrac{1}{4},\tfrac{3}{4}\}$. Since $H_b(\frac{1}{4}) = H_b(\frac{3}{4})$, it follows that $H(V_{t_1,t_2}) = H_b(3/4)$, for all functions $(t_1,t_2)$, and therefore $H_{\min} = H_b(3/4)$. Applying Theorem \ref{thm:additive_MACs_classical}, we have $C^{\C} = 1-H_b(3/4)$.  
	Next, let us show that the rate $1-H_b(\tfrac{2+\sqrt{2}}{4})$ is achievable with transmitter-side entanglement assistance. For each channel use, let $X_k' \in \{0,1\}$ be some variable that can be set by Transmitter $k$ for each $k\in \{1,2\}$. Based on $S_k$, Transmitter $k$ measures its entangled resource and obtains $B_k\in \{0,1\}$. It then sends $X_k = X_k' \oplus B_k$ so that the receiver sees $Y = X_1\oplus X_2 \oplus S_1S_2 = X_1'\oplus X_2' \oplus (B_1\oplus B_2 \oplus S_1S_2)$. It is known that there exists an entanglement-assisted strategy to obtain such $(B_1,B_2)$ that yield $\Pr(B_1\oplus B_2 \oplus S_1S_2 = 0) = \tfrac{2+\sqrt{2}}{4}$. This is known as the CHSH strategy \cite{clauser1969proposed}, in which the entangled resource consumed is a pair of qubits in the Bell state $\ket{\Phi^+} = \frac{1}{\sqrt{2}}(\ket{00}+\ket{11})$. We delegate the details to Appendix \ref{proof:CHSH}. Now, note that the receiver sees $Y = X_1'\oplus X_2' \oplus Z'$, where $Z'\triangleq B_1\oplus B_2 \oplus S_1S_2$ can be viewed as the suppressed interference, which has entropy $H(Z') = H_b(\tfrac{2+\sqrt{2}}{4})$. Let $R_2=0$ and set $X_2'=0$. It then follows from the direct coding theorem of a point-to-point channel that, by choosing $X_1'$ uniformly over $\{0,1\}$, any rate $R_1 < I(X_1';X_1' \oplus Z') = H(X_1' \oplus Z') - H(X_1' \oplus Z' \mid X_1') = 1 - H(Z')$ is achievable for $W_1$. Due to symmetry and by a TDMA scheduling scheme, any rate tuple $(R_1,R_2)$ satisfying $R_1+ R_2<1-H(Z')$ is also achievable. We point out that the achievable rate $1-H_b(\tfrac{2+\sqrt{2}}{4}) \approx 0.3991$ requires only transmitter-side entangled resources, and that once an entangled resource is consumed in one channel use, it does not need  to be carried to the next channel use, i.e., no cross-channel-use entanglement is needed.
\end{proof}

\begin{remark}[MAC without state] \label{rem:MAC_no_state}
	It follows from Theorem \ref{thm:MAC_no_state} provided in Section \ref{sec:MAC_no_state}, that for a $K$-user MAC without state, the maximum multiplicative capacity gain from entanglement assistance (or even with non-signaling assistance) cannot be more than $K$. In fact, the largest multiplicative gain  in Shannon capacity from entanglement assistance in a $2$-user MAC without state  noted in the literature thus far to our knowledge is only about $1.05$ \cite{seshadri2023separation, QMAC_Yun}, i.e., the additive gain in sum-rate is about $5\%$ of the classical capacity. In contrast, we see on Channel A1 that for a $2$-user state-dependent MAC, the multiplicative gain is strictly larger than $2$, i.e., the additive gain in sum-rate is $>100\%$ of the classical capacity. 	
\end{remark}

The following lemma is an elementary result on the asymptotic behavior of the function $1-H_b(p)$ around $p= 0.5$. This will be helpful to understand the rest of the results.
\begin{lemma} \label{lem:approximation}
	$1-H_b(\frac{1}{2}+\epsilon) = \frac{2}{\ln(2)}\epsilon^2 + \delta(\epsilon)$,  and $\lim_{\epsilon \to 0}\frac{\delta(\epsilon)}{\epsilon^2} = 0$. In addition, $1-H_b(\frac{1}{2}+\epsilon) \geq \frac{2}{\ln(2)}\epsilon^2, \forall \epsilon \in [-1/2,1/2]$.
\end{lemma}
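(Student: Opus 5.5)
The plan is to study $f(\epsilon) \triangleq 1-H_b(\tfrac12+\epsilon)$ on $[-\tfrac12,\tfrac12]$ directly through its derivatives. Writing $p=\tfrac12+\epsilon$, we have
\begin{align}
f(\epsilon) = 1 + p\log_2 p + (1-p)\log_2(1-p).
\end{align}
Differentiating gives $f'(\epsilon)=\log_2\frac{p}{1-p}=\frac{1}{\ln 2}\ln\frac{1+2\epsilon}{1-2\epsilon}$ and
\begin{align}
f''(\epsilon)=\frac{1}{\ln 2}\cdot\frac{4}{1-4\epsilon^2}.
\end{align}
At the midpoint, $f(0)=0$, $f'(0)=0$ and $f''(0)=4/\ln 2$. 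Taylor's theorem with Peano remainder at $\epsilon=0$ then gives $f(\epsilon)=\tfrac{2}{\ln 2}\epsilon^2+\delta(\epsilon)$ with $\delta(\epsilon)/\epsilon^2\to 0$. This settles the first claim. Alternatively, the power series $f(\epsilon)=\frac{1}{\ln 2}\sum_{j\ge1}\frac{(2\epsilon)^{2j}}{2j(2j-1)}$ makes the remainder explicit.

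For the global lower bound, I will use the second derivative. For $|\epsilon|<\tfrac12$,
\begin{align}
f''(\epsilon)\ge \frac{4}{\ln 2}.
\end{align}
Set $g(\epsilon)=f(\epsilon)-\tfrac{2}{\ln 2}\epsilon^2$. Then $g(0)=g'(0)=0$ and $g''\ge 0$ on $(-\tfrac12,\tfrac12)$. Hence $g$ is convex with a critical point at $0$, so $g\ge 0$ on the open interval.

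The endpoints $\epsilon=\pm\tfrac12$ follow by continuity of $f$, using the convention $0\log_2 0=0$. They can also be checked directly: $f(\pm\tfrac12)=1\ge \tfrac{2}{\ln 2}\cdot\tfrac14=\tfrac{1}{2\ln 2}\approx 0.72$.

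The only mild obstacle is handling the endpoints, where $f''$ blows up and the derivatives are not defined. Continuity together with the direct check resolves this. Equivalently, the power series has all nonnegative coefficients, and its first term is exactly $\tfrac{2}{\ln 2}\epsilon^2$. This gives the inequality immediately on the closed interval, since the series converges at $|2\epsilon|=1$.
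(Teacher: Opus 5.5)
Your proof is correct. For the expansion you follow the paper: Taylor at $p=\tfrac12$, where $f(0)=f'(0)=0$ and $f''(0)=4/\ln 2$.

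For the global inequality your main argument differs from the paper's. The paper computes all derivatives $g^{(k)}(p)=\frac{(k-2)!}{\ln 2}\bigl(\frac{(-1)^k}{p^{k-1}}+\frac{1}{(1-p)^{k-1}}\bigr)$. It observes that every Taylor coefficient at $p=\tfrac12$ is nonnegative (the odd ones vanish) and drops all terms beyond the quadratic one. You instead use only $f''(\epsilon)=\frac{4}{\ln 2\,(1-4\epsilon^2)}\ge f''(0)$, and conclude from the convexity of $f(\epsilon)-\frac{2}{\ln 2}\epsilon^2$, which has a critical point at $0$.

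Your route is more elementary and self-contained. It needs only the second derivative. It also avoids a step the paper leaves implicit: that the Taylor series at $\tfrac12$ actually converges to $g$ on all of $(0,1)$. This is true, since the complex singularities lie at $p=0,1$, but real-analyticity on $(0,1)$ alone does not give it.

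The paper's route gives more structure in return. Nonnegativity of every coefficient yields a lower bound from truncating the series at any order, e.g. adding $\frac{4}{3\ln 2}\epsilon^4$.

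Your alternative series $\frac{1}{\ln 2}\sum_{j\ge1}\frac{(2\epsilon)^{2j}}{2j(2j-1)}$ is exactly the paper's expansion, with the coefficients $g^{(2j)}(\tfrac12)/(2j)!$ written out. Your treatment of the endpoints is the same as the paper's.
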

\begin{proof}
	Let $g(p)=1-H_b(p)$, so that $1-H_b(\frac{1}{2}+\epsilon) = g(\frac{1}{2}+\epsilon)$. For $p\in (0,1)$, $g(p)= 1+p\log_2(p) +(1-p)\log_2(1-p)$ is analytic. We have for $p\in (0,1)$, $g'(p) = \log_2(p)-\log_2(1-p), g''(p) =\frac{1}{\ln(2)}(\frac{1}{p}+\frac{1}{1-p})$, $g'''(p) = \frac{1}{\ln(2)}(-\frac{1}{p^{2}}+\frac{1}{(1-p)^2})$, $\cdots$, $g^{(k)}(p) = \frac{(k-2)!}{\ln(2)}\big(\frac{(-1)^k}{p^{k-1}}+\frac{1}{(1-p)^{k-1}}\big)$ for $k\geq 2$, where $g^{(k)}(p)$ denotes the $k^{th}$ order derivative of $g$. In particular, $g^{(k)}(1/2) = 0$ for odd $k$, whereas $g^{(k)}(1/2) > 0$ for even $k$.
	 The Taylor expansion of $g(p)$ at $p=1/2$ gives $g(\frac{1}{2}+\epsilon)  = \frac{2}{\ln(2)}\epsilon^2 + \delta(\epsilon)$ where $\lim_{\epsilon \to 0} \frac{\delta(\epsilon)}{\epsilon^2}=0$. Moreover, all the remaining Taylor coefficients are non-negative, and thus   $g(\frac{1}{2}+\epsilon)\geq \frac{2}{\ln(2)}\epsilon^2$ for $\epsilon \in (-1/2,1/2)$. The endpoint cases $\epsilon=\pm 1/2$ follow directly from $1-H_b(1)=1-H_b(0)=1 > \frac{1}{2\ln(2)}$. 
\end{proof}

Next, let us define a particular subset of $2$-user additive MACs with state-dependent interference, namely \emph{Class A}  as follows.
\begin{definition}[Class A] \label{def:Class_A}
	A channel in Class A has $K=2$ users and $m=2$ (inputs and outputs are binary). The states $S_1$ and $S_2$ are independently and uniformly drawn from $\mathcal{S}_1$ and $\mathcal{S}_2$, respectively, where $|\mathcal{S}_1| = |\mathcal{S}_2| = L$ for some $L\in \mathbb{N}$. The state-dependent interference $Z = f(S_1,S_2)$, for a function $f\colon \mathcal{S}_1\times \mathcal{S}_2 \to \{0,1\}$.
\end{definition}
Without loss of generality, consider the channels in Class A for which $\mathcal{S}_1=\mathcal{S}_2 = [L]$. For $L\in \mathbb{N}$, suppose one randomly picks such a channel $\ch_L$ by assigning to $f(s_1,s_2)$ a value  uniformly in $\{0,1\}$, independently for each $(s_1,s_2)\in [L]^2$. Denote by $C^{\C}(\ch_L)$ and $C^{\EA}(\ch_L)$ the classical capacity and the entanglement-assisted capacity associated with $\ch_L$, respectively. We then have the following proposition. 
\begin{proposition}[Class A] \label{prop:ClassA}
	As $L\to \infty$, with probability $1-o(1)$, $C^{\C}(\ch_L) \leq 2L^{-1}+o(L^{-1})$, and $C^{\EA}(\ch_L) \geq \frac{2}{\ln(2)}L^{-1} + o(L^{-1})$. This implies that the multiplicative capacity gain from entanglement assistance for (asymptotically) almost every channel in Class A (as $L\to \infty$) is at least $1/\ln(2)\approx 1.44$.
\end{proposition}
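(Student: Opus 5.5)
The plan is to reduce both capacities to the bias of a $\pm1$ bilinear form built from a random sign matrix, and then use the quadratic behaviour of $1-H_b$ near $1/2$ from Lemma \ref{lem:approximation}. Let $M\in\{\pm1\}^{L\times L}$ with $M_{s_1s_2}=(-1)^{f(s_1,s_2)}$; its entries are i.i.d.\ Rademacher. For binary $V=t_1(S_1)\oplus t_2(S_2)\oplus Z$ with uniform independent states,
\[
\Pr(V=0)-\tfrac12=\tfrac{1}{2L^2}\,a^{T}Mb ,\qquad a_{s_1}=(-1)^{t_1(s_1)},\; b_{s_2}=(-1)^{t_2(s_2)} .
\]
By Theorem \ref{thm:additive_MACs_classical}, $C^{\C}(\ch_L)=1-H_b(\tfrac12+\epsilon_c)$ with $\epsilon_c=\frac{1}{2L^2}\max_{a,b\in\{\pm1\}^L}a^TMb$. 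The sign of the bias is irrelevant, since $a\mapsto -a$ flips it.

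\textbf{Classical upper bound.} For a fixed pair $(a,b)$, $a^TMb$ is a sum of $L^2$ independent Rademacher variables, so Hoeffding gives $\Pr(a^TMb>x)\le e^{-x^2/(2L^2)}$. A union bound over the $4^L$ pairs with $x=L^{3/2}\sqrt{2\ln 4\,(1+\delta_L)}$, taking $\delta_L\to0$ slowly (e.g.\ $\delta_L=L^{-1/2}$), gives failure probability
\[
4^Le^{-L\ln4(1+\delta_L)}=e^{-L\delta_L\ln 4}\to0 .
\]
Hence, with probability $1-o(1)$,
\[
\epsilon_c\le \frac{2\sqrt{\ln 2}\,L^{3/2}(1+o(1))}{2L^2}=\sqrt{\ln2/L}\,(1+o(1)).
\]
Since $\epsilon_c\to0$, the first part of Lemma \ref{lem:approximation} applies. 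It gives $1-H_b(\tfrac12+\epsilon_c)=\frac{2}{\ln 2}\epsilon_c^2(1+o(1))\le 2L^{-1}+o(L^{-1})$, using that $1-H_b(\tfrac12+\epsilon)$ is increasing in $|\epsilon|$.

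\textbf{Entanglement-assisted lower bound.} I will apply Theorem \ref{thm:EAIS} with binary measurements on a maximally entangled state, as in the CHSH construction of Appendix \ref{proof:CHSH}. By Tsirelson's construction, for any unit vectors $u_{s_1},v_{s_2}\in\mathbb{R}^d$ there exist an entangled state and $\pm1$-valued observables whose outcomes satisfy $\mathbb{E}[(-1)^{B_1\oplus B_2}\mid s_1,s_2]=\langle u_{s_1},v_{s_2}\rangle$. This yields a bias
\[
\epsilon_q=\frac{1}{2L^2}\sum_{s_1,s_2}M_{s_1s_2}\langle u_{s_1},v_{s_2}\rangle .
\]
The target is $\epsilon_q\ge L^{-1/2}(1-o(1))$ w.h.p., because $\frac{2}{\ln 2}\epsilon_q^2$ then equals $\frac{2}{\ln2}L^{-1}$. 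This also suffices because of the second part of Lemma \ref{lem:approximation}, the unconditional inequality $1-H_b(\tfrac12+\epsilon)\ge\frac2{\ln2}\epsilon^2$. The required bias is exactly the random-XOR-game result of Ambainis et al.\ \cite{ambainis2012quantum}: the normalized quantum value $\frac1{L^2}\max_{u,v}\sum M_{s_1s_2}\langle u_{s_1},v_{s_2}\rangle$ equals $2/\sqrt L\,(1+o(1))$ w.h.p.

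To make the argument self-contained, I would take $u_{s_1}$ to be the normalized rows of a matrix of the form $\sqrt{\cdot}$ applied to $MM^T$, and $v_{s_2}$ the normalized columns of a matched matrix. The sum then becomes roughly $\mathrm{Tr}\sqrt{M M^T}\cdot(\text{normalization})$. Its evaluation would use the Marchenko--Pastur/quarter-circle law for the singular values of $M/\sqrt L$, together with concentration of row norms, which keeps the normalization from distorting the value to leading order. Putting the two bounds together gives the ratio $\bigl(\tfrac{2}{\ln2}L^{-1}\bigr)/\bigl(2L^{-1}\bigr)=1/\ln 2$.

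\textbf{Main obstacle.} The hard step is this quantum lower bound, namely that the constant is exactly $2$ and not a smaller constant. A naive choice such as a rank-one strategy from the top singular vectors only gives about $\|M\|\cdot(\text{norm loss})$, which is too weak. The constant must come from a spectral-law computation, with care taken that imposing unit norms costs only a $1+o(1)$ factor. I would first try to cite \cite{ambainis2012quantum} directly, and only fall back to that computation if the citation does not give the exact constant.
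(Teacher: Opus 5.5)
Your proposal is correct and follows the paper's route. Both arguments reduce the two capacities to the classical and entangled biases of the associated random XOR game, take the entangled bias $(2+o(1))/\sqrt{L}$ from \cite[Thm.~1]{ambainis2012quantum}, and convert via Lemma~\ref{lem:approximation}. There are two differences. First, you re-derive the classical bound $(2\sqrt{\ln 2}+o(1))/\sqrt{L}$ by Hoeffding plus a union bound over the $4^L$ sign patterns, where the paper cites \cite[Thm.~4]{ambainis2012quantum}; that constant is exactly the union-bound estimate. Second, you invoke Tsirelson's construction explicitly to realize the vector value as a quantum strategy.

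Do not rely on your self-contained fallback as sketched, however. A fixed spectral function such as $\sqrt{MM^{\top}}$ only averages the singular values against a fixed weight, so it loses a constant factor. For example, the strategy with value $\Tr\sqrt{MM^{\top}}$ has normalized value $\approx\frac{8}{3\pi}L^{-1/2}$ by the quarter-circle law, not $2L^{-1/2}$. Reaching the constant $2$ requires concentrating on singular directions whose singular values lie within $o(\sqrt{L})$ of the edge $2\sqrt{L}$, and then controlling the normalization loss.
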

\noindent The proof is presented in Appendix \ref{proof:use_xor_games}. It is based on a result by Ambainis et al. \cite{ambainis2012quantum}.

\subsection{Exponential capacity gain in number of users $K$ with fixed (binary) state alphabet}
Now let us shift our focus to $K\geq 3$ users, and present an example for which we witness an exponential (in $K$) gain in capacity from transmitter-side  entanglement assistance. The examples in this section require only binary alphabet for the states.
\begin{definition}[Channel B1] \label{def:MerminGHZ_channel}
	Channel B1 has $K\geq 3$ users and $m=2$ (binary inputs and outputs). The states are binary and correlated so that $S_K = S_1 \oplus S_2 \oplus \cdots \oplus S_{K-1}\in\{0,1\}$, where $S_1,\ldots, S_{K-1}$ are independent, and uniform over $\{0,1\}$. The state-dependent interference is $Z=(S_1+S_2+\cdots +S_K)/2 \pmod 2$. 
\end{definition}
Note that the integer sum $S_1+S_2+\cdots+S_K$ yields an even number because of the correlation relationship, ensuring that $(S_1+S_2+\cdots +S_K)/2$ is an integer and then the mod 2 operation produces an output in $\{0,1\}$. The particular structure of $Z$ is inspired by the Mermin-GHZ game \cite{merminGHZ}.

The next proposition establishes an exponential gain from quantum entanglement assistance.
\begin{proposition}[Channel B1] \label{prop:ChannelB1}
	For Channel B1, the classical and entanglement-assisted capacities are as follows.
	\begin{align}
	C^{\C} &= 1-H_b(\tfrac{1}{2}+ 2^{-\lceil K/2 \rceil}) = O(2^{-K}), \label{eq:CC_ChannelB1} \\
	C^{\EA} &= 1.
	\end{align}
\end{proposition}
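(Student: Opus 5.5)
The plan is to obtain $C^{\C}$ by evaluating $H_{\min}$ in Theorem \ref{thm:additive_MACs_classical} exactly, and to obtain $C^{\EA}$ from Theorem \ref{thm:EAIS} using the Mermin-GHZ strategy, so that the suppressed interference becomes identically zero.

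\emph{Classical part.} Every function $t_k\colon\{0,1\}\to\{0,1\}$ is affine, $t_k(s)=a_k\oplus b_k s$. Hence $V=t_1(S_1)\oplus\cdots\oplus t_K(S_K)\oplus Z = c\oplus Z\oplus\bigoplus_{k\in T}S_k$ for some constant $c$ and some subset $T\subseteq[K]$. Since $V$ is binary, $H(V)=H_b\big(\tfrac12+\tfrac12|\beta_T|\big)$ with $\beta_T\triangleq\mathbb{E}\big[(-1)^{Z+\sum_{k\in T}S_k}\big]$, and $H(V)$ is minimized by maximizing $|\beta_T|$ over $T$.

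To compute $\beta_T$, use that $\sum_k S_k$ is even, so $(-1)^Z=i^{\sum_k S_k}$. Also, $(S_1,\dots,S_K)$ is uniform on even-weight strings, so an expectation under it equals $\mathbb{E}_{U}[f(U)(1+(-1)^{\sum_k U_k})]$ with $U$ uniform on $\{0,1\}^K$. Each resulting term factorizes over coordinates:
\begin{align}
\beta_T=\Big(\tfrac{1+i}{2}\Big)^{K-|T|}\Big(\tfrac{1-i}{2}\Big)^{|T|}+\Big(\tfrac{1-i}{2}\Big)^{K-|T|}\Big(\tfrac{1+i}{2}\Big)^{|T|}=2^{1-K/2}\cos\!\big(\tfrac{\pi}{4}(K-2|T|)\big).
\end{align}
Maximizing over $|T|\in[0:K]$ gives two cases. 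For $K$ even, a multiple of $4$ is attainable, so $\max|\beta_T|=2^{1-K/2}$. For $K$ odd, the cosine is at best $1/\sqrt2$, so $\max|\beta_T|=2^{(1-K)/2}$. In both cases $\tfrac12\max|\beta_T|=2^{-\lceil K/2\rceil}$, which yields \eqref{eq:CC_ChannelB1}.

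For the order claim, I need an upper bound complementing Lemma \ref{lem:approximation}. I will use $1-H_b(\tfrac12+\epsilon)\le \tfrac{4}{\ln 2}\epsilon^2$, which follows from $D(p\|q)\le\chi^2(p\|q)/\ln2$. This gives $C^{\C}=O(2^{-K})$.

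\emph{Entanglement-assisted part.} The converse $C^{\EA}\le\log_2|\mathcal{Y}|=1$ is immediate; if needed, I would argue it by Fano's inequality, since $H(Y^n)\le n$. For achievability, I apply Theorem \ref{thm:EAIS} with $\rho$ the $K$-qubit GHZ state $\tfrac{1}{\sqrt2}(\ket{0\cdots0}+\ket{1\cdots1})$. Transmitter $k$ measures the Pauli $X$ observable if $S_k=0$ and the Pauli $Y$ observable if $S_k=1$, and records $B_k\in\{0,1\}$ via outcome $(-1)^{B_k}$.

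The key step, and the only one needing care with signs, is the identity
\begin{align}
\bra{\mathrm{GHZ}}\textstyle\bigotimes_k X^{1-s_k}Y^{s_k}\ket{\mathrm{GHZ}}=\mathrm{Re}\big(i^{\sum_k s_k}\big).
\end{align}
It holds because $X^{1-s}Y^{s}\ket{0}=i^{s}\ket{1}$ and $X^{1-s}Y^{s}\ket{1}=(-i)^{s}\ket{0}$. For even $\sum_k s_k$, the right-hand side equals $(-1)^{\sum_k s_k/2}=(-1)^{z}$. Since the product of the outcomes has expectation $\pm1$, it is deterministic, so $B_1\oplus\cdots\oplus B_K= Z$ with probability one. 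Then $H(B_1\oplus\cdots\oplus B_K\oplus Z)=0$, and Theorem \ref{thm:EAIS} gives the rate $1$.

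I expect the main obstacle to be the classical bias computation, specifically the factorization trick and the case analysis $K$ even versus $K$ odd that produces $\lceil K/2\rceil$. The quantum part is the standard Mermin argument.
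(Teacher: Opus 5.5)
Your proposal is correct and follows essentially the paper's route. Your subset-$T$ parametrization with the $(1+(-1)^{|U|})$ indicator is the paper's $\mathrm{Re}\{\prod_k(1+ir_k)\}$ bias computation in different notation, and measuring $X$ or $Y$ on the GHZ state is equivalent to the paper's phase-gate-then-$X$ strategy. The only departures are minor: you bound $C^{\EA}\le 1$ via Fano and $H(Y^n)\le n$, which is valid since the receiver holds no entanglement and is more elementary than the paper's cited non-signaling bound $\max_{\mP_{\!X\mid S}}I(X;Y\mid S)$, and you use a non-asymptotic $\chi^2$ bound where the Taylor asymptotics of Lemma \ref{lem:approximation} already suffice for $O(2^{-K})$.
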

\noindent The proof of Proposition \ref{prop:ChannelB1} is presented in Section \ref{proof:MerminGHZ}.  
We plot the values of $C^{\EA}/C^{\C}$ for $K\in [3:8]$ in Fig. \ref{fig:gain}.

\begin{figure}[htbp]
\center
\begin{tikzpicture}
	\begin{axis}[
    width=8cm,
    height=5.5cm,
    axis lines=left,
    xmin=2.5, xmax=8.5,
    ymin=0, ymax=100,
    xlabel={$K$},
    ylabel={$C^{\EA}/C^{\C}$},
    xlabel style={
        at={(axis description cs:1.02,0.04)},
        anchor=west
    },
    ylabel style={
    	rotate =270,
    	at={(axis description cs:0.1,1.05)}
    },
    xtick={3,4,5,6,7,8},
    xticklabels={$3$,$4$,$5$,$6$,$7$,$8$},
    ytick={5.3,21.9,88.5},
    yticklabels={$5.3$,$21.9$,$88.5$},
    samples=200,
    domain=0.5:1,
    clip=false,
]

% C_{EA}/C_{C}
\addplot[
    only marks,
    mark=diamond,
    mark size=3.2pt,
    samples at={3,4,5,6,7,8},
]
{(1+(1/2+2^(-ceil(x/2)))*log2(1/2+2^(-ceil(x/2)))+(1/2-2^(-ceil(x/2)))*log2(1/2-2^(-ceil(x/2))))^(-1)};

% lines
\addplot[
    dashed,
    gray
]
coordinates {(2.5,5.3) (4,5.3)};
\addplot[
    dashed,
    gray
]
coordinates {(2.5,21.9) (6,21.9)};
\addplot[
    dashed,
    gray
]
coordinates {(2.5,88.5) (8,88.5)};

\end{axis}
\end{tikzpicture}
\caption{The ratio of entanglement-assisted  capacity to classical capacity, $C^{\EA}/C^{\C}$ is illustrated for Channel B1, versus the number of users $K$.}
\label{fig:gain}
\end{figure}

While the correlation among channel states that is assumed in Channel B1 does simplify our analysis, allowing us to find the exact capacities in Proposition \ref{prop:ChannelB1}, it is not an essential requirement for an exponential advantage. Our next example  will show that the exponential gain also exists for a channel where the channel states $(S_1,\ldots, S_K)$ are independent. See Definition \ref{def:ChannelB2}, which defines such a channel Channel B2, and Proposition \ref{prop:ChannelB2}.
\begin{definition}[Channel B2] \label{def:ChannelB2}
	Channel B2 has $K\geq 3$ users and $m=2$ (binary inputs and outputs). $S_1,S_2,\ldots, S_K$ are independent, and each is uniformly distributed over $\{0,1\}$. The state-dependent interference is $Z=\lceil (S_1+S_2+\cdots +S_K)/2\rceil \pmod 2$.
\end{definition}
\begin{proposition}[Channel B2] \label{prop:ChannelB2}
	For Channel B2, 
	\begin{align}
		C^{\C} \leq 1-H_b(\tfrac{1}{2}+ 2^{-\lceil (K+1)/2 \rceil}) = O(2^{-K}),
	\end{align}
	whereas 
	\begin{align}
		C^{\EA} \geq 1-H_b(3/4)\approx 0.1887.
	\end{align}
\end{proposition}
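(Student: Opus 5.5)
The proof has two independent parts: a classical converse via Theorem~\ref{thm:additive_MACs_classical}, and an entanglement-assisted achievability via Theorem~\ref{thm:EAIS}.

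\emph{Classical converse.} By Theorem~\ref{thm:additive_MACs_classical}, it suffices to show that for every choice of $t_1,\ldots,t_K$ the bias of
\[
V = t_1(S_1)\oplus\cdots\oplus t_K(S_K)\oplus Z
\]
satisfies $|\Pr(V=0)-\tfrac12| \le 2^{-\lceil (K+1)/2\rceil}$. Since $1-H_b(\tfrac12+\epsilon)$ is increasing in $|\epsilon|$, this gives the bound, and Lemma~\ref{lem:approximation} gives the $O(2^{-K})$ behaviour. Write the bias as $\tfrac12\,\mathbb{E}[(-1)^V]$. For binary $S_k$, each function $(-1)^{t_k(s_k)}$ lies in the span of $1$ and $(-1)^{s_k}$. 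So $\mathbb{E}[(-1)^V]$ equals a product of per-user factors $a_k+b_k(-1)^{s_k}$, with $|a_k|+|b_k|\le 1$, correlated against the Fourier expansion of $(-1)^Z$ over the uniform cube $\{0,1\}^K$. The function $(-1)^{\lceil w/2\rceil \bmod 2}$ of the Hamming weight $w$ is $\mathrm{Re}$ or $\mathrm{Im}$ of $(1+i)^{-1}\cdot i^{w}$-type expressions, i.e.\ $\sqrt2\cos(\pi w/2 - \pi/4)$ up to sign conventions. This is the Mermin-type function for $K+1$ parties with one extra fixed input.

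The cleanest route avoids computing Fourier coefficients directly. \textbf{Reduction:} embed Channel B2 into an instance of Channel B1 with $K+1$ users. Let $S_{K+1}=S_1\oplus\cdots\oplus S_K$. Then $\lceil w/2\rceil = (w+S_{K+1})/2$ whenever $w+S_{K+1}$ is even, which holds automatically. Hence $Z_{B2}$ equals $Z_{B1}$ for $K+1$ users, with the $(K+1)$th user's local function set to $t_{K+1}\equiv 0$. Therefore
\[
\max |\text{bias}|_{B2,K} \le \max |\text{bias}|_{B1,K+1} = 2^{-\lceil (K+1)/2\rceil},
\]
where the last equality is the classical Mermin game value used in Proposition~\ref{prop:ChannelB1}, whose proof I take as already established. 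If that value is not yet available at this point, I would prove it directly. Use the complex-valued trick
\[
\mathbb{E}\big[(-1)^{Z}\big] = \mathrm{Re}\Big(\mathbb{E}\big[\textstyle\prod_k i^{S_k}\big]\Big)\text{-type identities},
\]
together with the standard Mermin bound that the real part of $\prod_k (\alpha_k + i\beta_k)$, with $\alpha_k,\beta_k=\pm1$, is at most $2^{\lceil (K+1)/2\rceil}$ in magnitude after normalization. \textbf{This step is the main obstacle:} getting the exact exponent $\lceil (K+1)/2\rceil$ by carefully matching normalizations.

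\emph{EA achievability.} Apply Theorem~\ref{thm:EAIS}, and aim for $\Pr(B_1\oplus\cdots\oplus B_K = Z)\ge 3/4$ (or $\le 1/4$). Share a $K$-qubit GHZ state $(\ket{0\cdots0}+\ket{1\cdots1})/\sqrt2$. User $k$ measures in the basis $\cos\theta\,X+\sin\theta\,Y$, with $\theta = \tfrac{\pi}{2}S_k$ shifted by a global offset of $\pi/4$ split across users. The GHZ correlation gives
\[
\mathbb{E}\big[(-1)^{\oplus B_k}\big] = \cos\Big(\sum_k \theta_k\Big) = \cos\Big(\tfrac{\pi}{2}w - \tfrac{\pi}{4}\Big).
\]
Here $(-1)^{Z}=\sqrt2\cos(\tfrac{\pi}{2}w-\tfrac{\pi}{4})$ for $Z=\lceil w/2\rceil \bmod 2$. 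This is checked on $w \bmod 4$: the values are $1,1,-1,-1$, against $\cos(-\pi/4),\cos(\pi/4),\cos(3\pi/4),\cos(5\pi/4)$. Consequently
\[
\mathbb{E}\big[(-1)^{\oplus B_k\oplus Z}\big] = \tfrac{1}{\sqrt2}\cdot\sqrt2\,\cos^2 = \cos^2\!\Big(\tfrac{\pi}{4}\Big)\cdot 2 \cdot \tfrac12\ldots
\]
Precisely, $(-1)^Z\cos(\cdot)=1/\sqrt2$ always, giving success probability $\tfrac12+\tfrac{1}{2\sqrt2}\approx 0.854\ge 3/4$. Hence the entanglement-assisted rate satisfies
\[
R^{\EA}\ge 1-H_b\big(\tfrac{2+\sqrt2}{4}\big)\ge 1-H_b(3/4),
\]
with the sum-rate shared by TDMA as in Proposition~\ref{prop:ChannelA1}.
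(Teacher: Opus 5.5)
Your classical converse is the paper's argument. Adjoining $S_{K+1}=S_1\oplus\cdots\oplus S_K$ (the paper's $S_0$) turns $Z$ into the $(K+1)$-user Channel B1 interference. The minimum over $K$ functions is then at least the minimum over $K+1$ functions, and the Channel B1 bias bound applies. Your side remark that $(-1)^Z$ is $\sqrt{2}\cos(\frac{\pi}{2}w-\frac{\pi}{4})$ ``up to sign conventions'' carries a sign slip, but the reduction does not use it.

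The entanglement-assisted half takes a different route from the paper, and as written it fails because of that same sign slip. For $Z=\lceil w/2\rceil \bmod 2$, the values of $(-1)^Z$ on $w\equiv 0,1,2,3 \pmod 4$ are $1,-1,-1,1$, not $1,1,-1,-1$. Hence $(-1)^Z=\sqrt{2}\cos(\frac{\pi}{2}w+\frac{\pi}{4})$, while $\sqrt{2}\cos(\frac{\pi}{2}w-\frac{\pi}{4})=(-1)^{\lfloor w/2\rfloor}$.

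With the offset you actually wrote, $(-1)^Z\cos(\frac{\pi}{2}w-\frac{\pi}{4})=(-1)^w/\sqrt{2}$. This averages to $0$, because $w$ is even with probability $1/2$. The effective noise $B_1\oplus\cdots\oplus B_K\oplus Z$ is then uniform, and Theorem~\ref{thm:EAIS} gives rate $0$.

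The repair is to make the offsets sum to $+\pi/4$, which matches your verbal description. For example, take $\theta_1=\frac{\pi}{2}S_1+\frac{\pi}{4}$ and $\theta_k=\frac{\pi}{2}S_k$ for $k\geq 2$. Then $(-1)^Z\cos(\sum_k\theta_k)=\sqrt{2}\cos^2(\frac{\pi}{2}w+\frac{\pi}{4})=1/\sqrt{2}$ for every $w$. So $\Pr(B_1\oplus\cdots\oplus B_K=Z\mid S^K)=\frac{2+\sqrt{2}}{4}$ for every state realization. This gives $C^{\EA}\geq 1-H_b(\frac{2+\sqrt{2}}{4})\approx 0.399$, which implies the stated bound $1-H_b(3/4)$.

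With that fix, your scheme is correct and strictly stronger than the paper's. The paper reuses the Channel B1 GHZ scheme unchanged (phase gate $\mathsf{R}$ when $S_k=1$, then measure $\mathsf{X}$) and conditions on $S_0$. In your language, the absent $(K+1)$th party sits at angle $0$. That is exactly right when $S_0=0$ and gives a uniform XOR when $S_0=1$, hence $\frac{1}{2}\cdot 1+\frac{1}{2}\cdot\frac{1}{2}=\frac{3}{4}$.

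Your offset of $\pi/4$ places the absent party midway between its two correct angles. This trades the perfect/useless split for a constant success probability $\cos^2(\pi/8)$. The paper's route needs no computation beyond Proposition~\ref{prop:ChannelB1} and makes the conditioning on $S_0$ transparent. Yours gives a larger constant, and it would likewise raise $\alpha$ from $\frac{3}{4}$ to $\frac{2+\sqrt{2}}{4}$ in the robustness argument behind Remark~\ref{rem:robust}.
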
 
\noindent The proof of Proposition \ref{prop:ChannelB2} is presented in Appendix \ref{proof:Mermin_indep}.

\begin{remark}[Robustness] \label{rem:robust}
	The entanglement-assisted schemes to achieve $C^{\EA}$ for Channel B1 (and the lower bound of $C^{\EA}$ for Channel B2) only consume $K$ entangled qubits (one per transmitter) over each channel use, and no cross-channel-use entanglement is needed. Moreover, we show in Appendix \ref{proof:robust} that the exponential gain over $C^{\C}$ persists even if each qubit independently has $\approx 30\%$ probability of being completely depolarized. 
\end{remark}

Next, let us define a particular subset of additive MACs with state-dependent interference, namely \textit{Class $B$} as follows. 
\begin{definition}[Class B]
	A channel in Class B has $K$ users and $m=2$ (binary inputs and outputs). The states $S_1,S_2,\ldots, S_K$ are drawn from $\{0,1\}$ independently uniform. The state-dependent interference is $Z = g(S_1,\ldots,S_K)$, where $g\colon \{0,1\}^K \to \{0,1\}$ is a `symmetric' function, such that for any $(s_1,\ldots, s_K)\in \{0,1\}^K$ and any permutation $\pi\colon [K] \to [K]$, $g(s_1,\ldots, s_K) = g(s_{\pi(1)},\ldots, s_{\pi(K)})$. Equivalently, such a channel is specified by $K$ and a tuple $(G_0,G_1,\ldots, G_K)\in \{0,1\}^{K+1}$ such that $g(s_1,\ldots, s_K) = G_{s_1+\cdots +s_K}$.
\end{definition}
Suppose one randomly picks a $K$-user channel $\ch_K$ in Class B, by assigning $G_i$ to a value uniformly in $\{0,1\}$, independently for each $i\in [0:K]$. Denote by $C^{\C}(\ch_K)$ and  $C^{\EA}(\ch_K)$ the classical capacity and the entanglement-assisted capacity for $\ch_K$, respectively. We then have the following proposition.
\begin{proposition}[Class B] \label{prop:ClassB}
	As $K\to \infty$, with probability $1-o(1)$, $C^{\C}(\ch_K) = O(K^{-1/2})$, whereas $C^{\EA} = \Omega\big(\log(K)K^{-1/2}\big)$. This implies that the multiplicative gain from entanglement assistance for (asymptotically) almost every channel in Class B (as $K\to \infty$) is at least on the order of $\log(K)$.
\end{proposition}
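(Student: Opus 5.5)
The plan is to reduce both bounds to statements about XOR games whose predicate is a random symmetric function of the Hamming weight, and then use Lemma \ref{lem:approximation} to turn game biases into rates.

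\textbf{Classical upper bound.} By Theorem \ref{thm:additive_MACs_classical}, $C^{\C}(\ch_K)=1-H_{\min}$. With $m=2$, every $t_k$ is one of four functions: $0$, $1$, $s$, or $1\oplus s$. Hence
\[
t_1(S_1)\oplus\cdots\oplus t_K(S_K)=c\oplus\bigoplus_{k\in T}S_k
\]
for some constant $c$ and subset $T\subseteq[K]$. Write $\epsilon$ for the bias of $V=\bigoplus_{k\in T}S_k\oplus G_{|S|}$, i.e.\ $\Pr(V=0)=\frac12+\epsilon$. Then $1-H(V)=1-H_b(\frac12+\epsilon)\le 2|\epsilon|$, since near $1/2$ this quantity is of order $\epsilon^2$ anyway. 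So it suffices to show that, with probability $1-o(1)$ over $G$,
\[
\max_T\,\Bigl|\mathbb{E}\bigl[(-1)^{\bigoplus_{T}S_k}(-1)^{G_{|S|}}\bigr]\Bigr|=O(K^{-1/4}).
\]
This gives $\epsilon^2=O(K^{-1/2})$.

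\begin{itemize}
\item By symmetry, the expectation depends only on $j=|T|$. It equals $\sum_w \Pr(|S|=w)\,(-1)^{G_w}\,\kappa_j(w)$, where $\kappa_j(w)$ is the normalized Krawtchouk value $\mathbb{E}[(-1)^{\bigoplus_T S}\mid |S|=w]$.
\item For fixed $j$, this is a sum of independent $\pm$ signs with weights $a_w=\Pr(|S|=w)\kappa_j(w)$. Hoeffding gives a tail bound $\exp(-t^2/(2\sum_w a_w^2))$.
\item We have $\sum_w a_w^2\le\max_w\Pr(|S|=w)=O(K^{-1/2})$. Taking $t=K^{-1/4}\sqrt{\log K}$ and a union bound over the $K+1$ values of $j$ gives
\[
\max_T|\text{bias}|=O\bigl(K^{-1/4}\sqrt{\log K}\bigr),
\]
which yields $C^{\C}=O(\log K\cdot K^{-1/2})$.
\item To remove the log factor I would use $\sum_w a_w^2\le\sum_w\Pr(|S|=w)^2$ together with orthogonality of the Krawtchouk polynomials. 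This should show that only $O(\sqrt K)$ values of $j$ can carry large weight, so that $\mathbb{E}\max_j|\cdot|^2$ is controlled by a chaining or second-moment argument.
\end{itemize}

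I expect this log-removal to be delicate. The fallback is to accept the slightly weaker bound, which still shows a gain.

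\textbf{Entanglement-assisted lower bound.} I would invoke Theorem \ref{thm:EAIS}, which gives the rate $1-H_b(\frac12+\epsilon_q)\ge\frac{2}{\ln 2}\epsilon_q^2$ by Lemma \ref{lem:approximation}. Here $\epsilon_q$ is the quantum bias of the symmetric XOR game with predicate $G_{|S|}$. The plan is to cite the result of Ambainis et al.\ \cite{ambainis2012advantage,ambainis2013provable} on random symmetric XOR games. It states that, with probability $1-o(1)$, the quantum value exceeds $1/2$ by $\Omega(\sqrt{\log K}\,K^{-1/4})$, attained by a GHZ state with measurements in the $X$–$Y$ plane at angles depending on $s_k$. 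Squaring gives $C^{\EA}=\Omega(\log K\cdot K^{-1/2})$. Together with $C^{\C}=O(K^{-1/2})$, the ratio is $\Omega(\log K)$.

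Concretely, the GHZ strategy yields correlation $\mathbb{E}[(-1)^{G_w}\cos(w\theta+\phi)]$. One chooses $\theta,\phi$ from a grid of $\mathrm{poly}(K)$ candidates. The maximum over roughly $K^{1/2}$ nearly independent Gaussian-like sums with variance $\Theta(K^{-1/2})$ then gives the extra $\sqrt{\log K}$ factor.

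\textbf{Main obstacle.} The hard part is proving this lower tail of the maximum with high probability, that is, an anticoncentration or Gaussian-maximum bound for dependent candidates. If it is not directly citable, I would pick about $\sqrt K$ angles $\theta$ whose cosine vectors are nearly orthogonal under the binomial weights. Applying a Slepian/Sudakov-type comparison or a second-moment method then shows that at least one of them reaches $\Omega(\sqrt{\log K})$ standard deviations.
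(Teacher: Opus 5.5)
Your architecture is the paper's, and the EA half is fine. Like the paper, you reduce both capacities to the classical and entangled biases of the symmetric XOR game via Theorem~\ref{thm:additive_MACs_classical} and Theorem~\ref{thm:EAIS}, convert biases to rates with Lemma~\ref{lem:approximation}, and cite Ambainis et al.\ for the entangled bias $\Omega(\sqrt{\log K}\,K^{-1/4})$.

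\textbf{The gap is in the classical half.} The paper simply cites \cite[Thm.~4]{ambainis2012advantage} for $\beta^{\C}=O(K^{-1/4})$; you try to prove it instead. What your Hoeffding-plus-union-bound argument actually establishes is $\beta^{\C}=O(\sqrt{\log K}\,K^{-1/4})$, i.e.\ $C^{\C}=O(\log K\cdot K^{-1/2})$. That is the same order as your lower bound on $C^{\EA}$.
\begin{itemize}
\item Your fallback does \emph{not} ``still show a gain''. It only gives $C^{\EA}/C^{\C}=\Omega(1)$ with an uncontrolled constant, so the entire $\log K$ factor of the proposition sits in the log-removal step you leave as a sketch.
\item That sketch would not work as phrased. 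If $\Theta(\sqrt K)$ indices $j$ carried variance of order $K^{-1/2}$, a union bound over them would again cost a factor $\sqrt{\log K}$ in the bias.
\item Separately, $1-H_b(\frac12+\epsilon)\le 2|\epsilon|$ only yields $O(K^{-1/4})$ for the capacity. You need the quadratic bound $1-H_b(\frac12+\epsilon)\le\frac{4}{\ln 2}\epsilon^2$, which follows e.g.\ from $D\le\chi^2$.
\end{itemize}

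\textbf{How to close it.} Your Krawtchouk idea works once you use orthogonality with weight $p_w=\binom{K}{w}2^{-K}$ rather than $p_w^2$. Write $X_j=\sum_w p_w\kappa_j(w)(-1)^{G_w}$.
\begin{itemize}
\item The function $\kappa_j(|S|)=\mathbb{E}[\chi_T(S)\mid |S|]$ equals the symmetrization $\binom{K}{j}^{-1}\sum_{|T'|=j}\chi_{T'}(S)$. Hence $\sum_w p_w\kappa_j(w)^2=\binom{K}{j}^{-1}$.
\item Therefore $\mathrm{Var}(X_j)=\sum_w p_w^2\kappa_j(w)^2\le \max_w p_w/\binom{K}{j}=O(K^{-3/2})$ for every $1\le j\le K-1$.
\item Hoeffding plus a union bound over these $K-1$ indices gives $\max_{1\le j\le K-1}|X_j|=O(K^{-3/4}\sqrt{\log K})$ with probability $1-o(1)$.
\item So only $X_0$ and $X_K$ matter; these are Rademacher sums with variance $\sum_w p_w^2=\Theta(K^{-1/2})$.
\end{itemize}
This gives $\beta^{\C}=O(K^{-1/4})$ with a constant depending on the desired confidence level. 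That dependence is unavoidable, since $X_0$ is asymptotically Gaussian at scale $K^{-1/4}$. It then yields $C^{\C}\le(\beta^{\C})^2/\ln 2=O(K^{-1/2})$. Alternatively, cite the classical bound as the paper does.
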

\noindent The proof is presented in Appendix \ref{proof:use_xor_games}. It is based on the results by Ambainis et al. \cite{ambainis2012advantage, ambainis2013provable}. 

\subsection{Unbounded capacity gain for large state alphabet with fixed ($K=3$) number of users}
The capacity gain factors we saw in Class B are bounded if $K$ is fixed. One may wonder if the gain factor can be actually unbounded with finite number of users. We answer this question in the affirmative in the following proposition.
\begin{proposition}[Unbounded] \label{prop:ClassC}
	For $r\in \mathbb{N}$, there exists a sequence of $(K=3, m=2)$ additive MACs with state-dependent interference, $\{\ch_r\}_r$, where $\ch_r$ has $|\mathcal{S}_1|=|\mathcal{S}_2|=|\mathcal{S}_3| = 4^{r}$, such that $C^{\EA}(\ch_r)/C^{\C}(\ch_r) \geq \Omega\big(2^{r} r^{-5}\big)$. Let us use Class C to refer to the channels in this sequence.
\end{proposition}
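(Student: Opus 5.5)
Consider a family of 3-player XOR games with an unbounded ratio between entangled bias and classical bias. The natural candidate is the Pérez-García et al. / Briët–Vidick construction of tripartite XOR games: for $N=2^r$ (local dimension of entanglement), each player has $4^r$ questions, and the game has a coefficient tensor $T\in\mathbb{R}^{4^r\times4^r\times4^r}$ with $\sum|T|=1$. Its classical bias is $\beta_{\C}=\max_{a,b,c\in\{\pm1\}}\sum T_{s_1s_2s_3}a_{s_1}b_{s_2}c_{s_3}$. Its entangled bias satisfies $\beta_{\EA}/\beta_{\C}\ge\Omega(2^{r/2}r^{-5/2})$, up to the exact exponents. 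I will take as given from \cite{Briet_and_Vidick} the existence of such a game, with an explicit question distribution and target function, and with $\beta_{\EA}/\beta_{\C}=\Omega(2^{r/2}/\mathrm{poly}(r))$.

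\textbf{Encoding the game as a channel.} Write $T_{s}=\mathsf{P}(s)\,(-1)^{f(s)}$. Take $\mP_{S_1S_2S_3}(s)=|T_s|$ (the states may be correlated, which the definition allows) and let $Z=f(S_1,S_2,S_3)$. For functions $t_k$, set $a_{s_1}=(-1)^{t_1(s_1)}$ and similarly for the other players. Then $\Pr(t_1(S_1)\oplus t_2(S_2)\oplus t_3(S_3)\oplus Z=0)=\tfrac12(1+\beta)$, where $\beta$ is the value of the game under this deterministic strategy. Since $H_b$ is symmetric, Theorem~\ref{thm:additive_MACs_classical} gives
\begin{equation*}
C^{\C}=1-H_b(\tfrac12+\tfrac12\beta_{\C}).
\end{equation*}
For the entangled scheme, Theorem~\ref{thm:EAIS} applied to the optimal quantum strategy (POVMs with outcomes $B_k$) gives
\begin{equation*}
C^{\EA}\ge 1-H_b(\tfrac12+\tfrac12\beta_{\EA}).
\end{equation*}

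\textbf{Comparing the two capacities.} Lemma~\ref{lem:approximation} gives $C^{\EA}\ge\frac{1}{2\ln2}\beta_{\EA}^2$. For the upper bound I need $1-H_b(\tfrac12+\epsilon)\le c\,\epsilon^2$ for all $\epsilon$. This holds, for example with $1-H_b(\tfrac12+\epsilon)\le 4\epsilon^2$, because $g(p)=1-H_b(p)$ is bounded by $(2p-1)^2$; that inequality is standard via $\ln(1+x)\le x$. Hence $C^{\C}\le\beta_{\C}^2$, and therefore
\begin{equation*}
\frac{C^{\EA}}{C^{\C}}\ge\frac{1}{2\ln 2}\Bigl(\frac{\beta_{\EA}}{\beta_{\C}}\Bigr)^2=\Omega\bigl(2^{r}r^{-5}\bigr).
\end{equation*}
The squaring turns a bias gap of order $2^{r/2}r^{-5/2}$ into the stated exponent $2^r r^{-5}$, which matches the claim.

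\textbf{Main obstacle.} The hard step is pinning down the exact XOR-game result and its parameters: questions of size $4^r$ per player and a ratio $\Omega(2^{r/2}r^{-5/2})$. These come from the Briët–Vidick random tensor construction, whose existence statement is probabilistic. I would cite their theorem directly. I would also check two technical points. First, the normalization of $\beta_{\EA}$: their entangled bias uses real vector or operator strategies, which must be realized by POVMs with binary outcomes on a shared state, namely $\pm1$ observables on the maximally entangled or GHZ-type state they use. Second, the game's question distribution may be non-product, and correlated states are permitted by Definition~\ref{def:additive_MAC}. If the cited bound is instead stated as a gap in the bias-squared or with a different polynomial factor, the same computation adapts by squaring whatever ratio is available.
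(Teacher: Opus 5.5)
Your proposal is correct and follows essentially the paper's route. You encode the Briët–Vidick tripartite XOR game ($4^r$ questions per player, bias ratio $\Omega(2^{r/2}r^{-5/2})$, correlated question distribution taken as the state distribution) as the channel $Y=X_1\oplus X_2\oplus X_3\oplus f(S_1,S_2,S_3)$. You then get $C^{\C}=1-H_b(\tfrac12+\tfrac12\beta^{\C})$ from Theorem~\ref{thm:additive_MACs_classical} and $C^{\EA}\ge 1-H_b(\tfrac12+\tfrac12\beta^{\EA})$ from Theorem~\ref{thm:EAIS}, and square the bias ratio.

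The one difference is how you upper-bound $C^{\C}$. You use the global quadratic bound $1-H_b(\tfrac12+\epsilon)\le c\,\epsilon^2$, whereas the paper first deduces $\beta^{\C}\to 0$ from $\beta^{\EA}\le 1$ and then applies the Taylor expansion. Your bound is valid and slightly cleaner. The constant $4$ follows from $H_b(p)\ge 4p(1-p)$; the $\ln(1+x)\le x$ route you cite only gives $4/\ln 2$, but that suffices equally well.
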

\noindent The proof is presented in Appendix \ref{proof:use_xor_games}. It is based on a result by Bri\"{e}t and Vidick \cite{Briet_and_Vidick}. We point out that the channels used to show the existence of the gain may have dependent channel states $(S_1,S_2,S_3)$. 

The table below summarizes the multiplicative gains $C^{\EA}/C^{\C}$ in the aforementioned channels (classes). Note that in all cases considered, the channel input and output alphabet are binary.

\begin{table}[htbp]
\caption{Summary of the capacity gains via entanglement assistance}
\center
\begin{tabular}{|l|l|l|}
\hline
Channel/Class & Description (inputs and outputs are always binary)& $C^{\EA}/C^{\C}$ \\ \hline
Channel A1 & $2$ users, $(S_1,S_2)\sim {\rm Unif}(\{0,1\}^2)$, $Z=S_1S_2$ & $\geq 2.1$ \\ \hline
Class A (generic) & $2$ users, $L$-ary indep. states, $Z=f(S_1,S_2)$ & $\geq 1.44$ w.h.p. \\ \hline
Channel B1 & $K$ users, binary dependent states & $\Omega(2^K)$\\ \hline
Channel B2 & $K$ users, binary indep. states & $\Omega(2^K)$\\ \hline
Class B (generic) & $K$ users, binary indep. states, symmetric interference & $\Omega(\log(K))$ w.h.p.\\ \hline
Class C & $3$ users, $4^r$-ary dependent states & $\Omega(2^{r}r^{-5})$ \\ \hline
\end{tabular}
\end{table}

\subsection{MAC without state} \label{sec:MAC_no_state} 
The purpose of this subsection is to provide an \emph{impossibility} result for the capacity gain from quantum entanglement assistance in MACs \emph{without state}. Consider $K$-user MACs \emph{without state}, i.e., specified by their channel conditional distribution $\mN(y\mid x_1,x_2,\ldots, x_K)$. Fawzi and Ferm\'{e} \cite{fawzi2024MAC} studied the capacity for $2$-user MACs with non-signaling assistance (which strictly includes quantum entanglement  assistance)  for all parties ($2$ transmitters and $1$ receiver). Let us extend the setting to the $K$-user MACs, and denote the $(K+1)$-partite non-signaling assisted (sum-rate) capacity as $C^{\NS}$. Clearly, $C^{\NS} \geq C^{\EA}$. We have the following theorem.

\begin{theorem} \label{thm:MAC_no_state}
	For any $K$-user MAC without state, $\mN(y\mid x_1,x_2,\ldots, x_K)$, we have $C^{\NS} \leq KC^{\C}$. Therefore, for any $K$-user MAC without state, $C^{\EA} \leq C^{\NS}\leq KC^{\C}$.
\end{theorem}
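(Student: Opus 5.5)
The plan is to sandwich both capacities between two single-letter quantities involving the best single-user rate. Define
\[
C^{\star} \triangleq \max_{k\in[K]}\; \max_{x_j,\, j\neq k}\; \max_{\mP_{X_k}} I(X_k;Y)\Big|_{X_j=x_j,\,\forall j\neq k},
\]
that is, the largest point-to-point capacity any single transmitter can obtain when the other inputs are frozen at fixed symbols. The proof then has two steps: the lower bound $C^{\C}\geq C^{\star}$ and the upper bound $C^{\NS}\leq K C^{\star}$. Together these give $C^{\NS}\leq KC^{\C}$. The inequality $C^{\EA}\leq C^{\NS}$ holds because non-signaling assistance contains the quantum-assisted schemes; the receiver simply ignores its share.

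\textbf{Lower bound.} This step is immediate. Let all users $j\neq k$ send the maximizing constant symbols $x_j$. User $k$ then faces a discrete memoryless point-to-point channel $\mN(y\mid x_k, x_{-k})$, and Shannon's coding theorem shows that any rate below $C^\star$ is achievable. Hence $C^{\C}\geq C^{\star}$.

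\textbf{Upper bound.} Take a non-signaling assisted code with rates $(R_1,\ldots,R_K)$ and vanishing error. The idea is to split the sum rate into $K$ single-user terms, each bounded by $C^\star$. First, fix $k$ and any $\epsilon>0$. Consider the point-to-point channel from user $k$ to the receiver in which the other $K-1$ transmitters, together with their shares of the non-signaling box and their messages, are treated as part of the assistance available to a single encoder and the decoder. Concretely, group transmitters $j\neq k$ together with the receiver, and give their messages $W_{-k}$ to this grouped party as known side information, since it is independent of $W_k$. Freezing $W_{-k}$ to a typical realization does not increase the error probability for $W_k$ by averaging. The result is a non-signaling assisted code for $W_k$ over a channel $x_k\mapsto Y$ whose other inputs are generated by a party that is co-located with the receiver. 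Note that this grouped party cannot see $Y$ before choosing $X_{-k}$, but it holds a share of the non-signaling resource.

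The key step is to show that this configuration is no stronger than a point-to-point non-signaling assisted code over the compound channel $\{\mN(\cdot\mid\cdot,x_{-k})\}$ with the $x_{-k}$ choice absorbed into the box. A box between the encoder and the grouped (helper plus decoder) party that also outputs $X_{-k}^n$ can be viewed as a bipartite non-signaling box between encoder and decoder followed by a channel $x_k^n\mapsto y^n$. That channel is a mixture, with mixing weights that depend only on the decoder's box side, over sequences $x_{-k}^n$ of product channels. Matthews' result \cite{matthews2012linear} then states that non-signaling assistance does not increase point-to-point capacity, via his linear-program meta-converse. Applying it here would give $R_k\leq \max_{x_{-k}}$ capacity $=C^\star$ in the single-letter limit, so summing over $k$ yields $\sum_k R_k\leq KC^\star$.

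\textbf{Main obstacle.} The hardest step is justifying that the helper-controlled, possibly adaptive and correlated choice of $X_{-k}^n$ cannot beat $\max_{x_{-k}}$ of the single-user capacity. I would handle this directly with Matthews' LP converse, the hypothesis-testing meta-converse, for an arbitrary-varying channel. Compare the true output distribution, given the encoder's box input, against a product output distribution $q^{n}$ chosen as the capacity-achieving output distribution of the minimax problem
\[
\min_q\max_{x_k,x_{-k}} D\big(\mN(\cdot\mid x_k,x_{-k})\,\big\|\,q\big),
\]
whose value equals $C^\star$ by the saddle-point (radius) characterization of capacity. The non-signaling condition guarantees that the decoder-side marginal, including the helper's $X_{-k}^n$ choices, does not depend on the message $W_k$. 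This is exactly what is needed to run the standard argument that $\log M_k\leq$ a hypothesis-testing divergence bounded by $nC^\star+o(n)$ through a Chebyshev/additivity step. If this route fails, a fallback is to bound $I(W_k;Y^n\mid W_{-k})\leq \sum_i I(X_{k,i};Y_i\mid X_{-k,i},\text{box side})$ in a way that respects non-signaling. That bound is more delicate because non-signaling boxes lack a global joint state.
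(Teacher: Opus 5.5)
Your skeleton is the paper's: bound each $R_k$ under non-signaling assistance by $C^\star$, observe that $C^\star$ is classically achievable by freezing the other inputs, and sum over $k$. The paper gets the per-user bound by grouping the users $j\neq k$ into one transmitter and invoking \cite[Prop.~41]{fawzi2024MAC}, which gives $R_k\le\max_{\mP_{X_kX_{-k}}}I(X_k;Y\mid X_{-k})$. Since this conditional mutual information is an average over $x_{-k}$, and $\mP_{X_k\mid X_{-k}}$ can be chosen separately for each $x_{-k}$, the maximum is attained at a point mass and equals your $C^\star$.

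Your from-scratch substitute for that citation fails at a specific step. With a single reference law $q$ on $\mathcal{Y}$, the radius $\min_q\max_{x_k,x_{-k}}D(\mN(\cdot\mid x_k,x_{-k})\|q)$ is, by the very saddle-point characterization you invoke, the capacity of the channel whose input is the \emph{pair} $(x_k,x_{-k})$. That is the fully cooperative capacity $\max_{\mP_{X_1\cdots X_K}}I(X_1,\ldots,X_K;Y)$, not $C^\star$. The discrepancy can be unbounded. Take $K=2$, $\mathcal{X}_1=\mathcal{X}_2=[N]$, and $Y=x_1$ if $x_1=x_2$, $Y=\bot$ otherwise. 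Freezing either input leaves two possible outputs, so $C^\star=1$. Also $C^{\C}$ stays below $1+2/(e\ln 2)$ bits for every $N$, because $H(Y)\le 1+\Pr(Y\neq\bot)H(Y\mid Y\neq\bot)$ and, under product input laws $a,b$, the last term is at most $\max_j(-a_j\log_2 a_j)+\max_j(-b_j\log_2 b_j)$. Yet your radius is $\log_2(N+1)$. So the argument as written only gives $R_k$ at most the cooperative capacity. That is no better than the obvious bound $C^{\NS}\le\max_{\mP_{X_1\cdots X_K}}I(X_1,\ldots,X_K;Y)$ and does not imply $C^{\NS}\le KC^{\C}$. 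Appealing to \cite{matthews2012linear} does not close this either: that result concerns a fixed DMC, whereas here $X_{-k}^n$ is produced by a party correlated with encoder $k$ through the box.

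The fact you mention at the end is the right ingredient; what is missing is letting the reference depend on the helper's symbols. Keep the true law of $(W,X^n)$ produced by the box, and replace only the channel by $\prod_i q_{x_{-k,i}}(y_i)$. Here $q_{x_{-k}}$ is the capacity-achieving output law of the single-user channel $x_k\mapsto\mN(\cdot\mid x_k,x_{-k})$. Two non-signaling facts then bound decoding success under this reference by $1/M_k$: the law of $X^n$ does not depend on $y^n$, and the grouped party's law of $(X_{-k}^n,\widehat{W})$ given $(w,y^n)$ does not depend on $w_k$. Conditioned on $x^n$, the log-likelihood ratio is a sum of independent terms with means $D(\mN(\cdot\mid x_{k,i},x_{-k,i})\|q_{x_{-k,i}})\le C^\star$. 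The hypothesis-testing bound plus Chebyshev then gives $\log_2 M_k\le nC^\star+O(\sqrt n)$. Alternatively, cite the Fawzi--Ferm\'e bound and add the point-mass reduction, as the paper does. Either way, the rest of your proof goes through unchanged.
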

\begin{proof}[Proof of Theorem \ref{thm:MAC_no_state}]
We are given the $K$-user MAC, i.e., $\mN(y\mid x_1,x_2,\ldots, x_K), \forall x_k \in \mathcal{X}_k, k\in [K], y\in \mathcal{Y}$. For  any rate tuple $(R_1,R_2,\ldots, R_K)$ achievable with non-signaling assistance, by considering $(X_2,\ldots, X_K)\triangleq X_{1^c}$ as the input for one (big) transmitter,  \cite[Prop. 41]{fawzi2024MAC} implies that
\begin{align}
	R_1 &\leq \max_{\mP_{\!X_1X_{1^c}}} I(X_1;Y\mid X_{1^c}) \label{eq:max_joint}  \\
	&=\max_{\mP_{\!X_1\mid X_{1^c}}} \max_{\mP_{\!X_{1^c}}} \sum_{x_{1^c}\in \mathcal{X}_{1^c}}\mP_{\!X_{1^c}}(x_{1^c}) I(X_1;Y\mid X_{1^c}=x_{1^c}) \label{eq:max_conditional_1} \\
	&=  \max_{\mP_{\!X_1\mid X_{1^c}}}  \max_{x_{1^c} \in \mathcal{X}_{1^c}}  I(X_1;Y\mid X_{1^c}=x_{1^c}) \label{eq:max_conditional_2}\\
	&= \max_{x_{1^c}} \max_{\mP_{\!X_1}}  I(X_1;Y\mid X_{1^c}=x_{1^c}) \label{eq:two_user_C1} \\
	&\triangleq R_{1}^{\max}
\end{align}
In \eqref{eq:max_joint}, $\mP_{\!X_1X_{1^c}}$ is any joint distribution of $(X_1,X_{1^c}) = (X_1,X_2,\ldots, X_K)$. In \eqref{eq:max_conditional_1} and \eqref{eq:max_conditional_2}, $\mathcal{X}_{1^c} \triangleq \mathcal{X}_2 \times \cdots \times \mathcal{X}_K$, $\mP_{\!X_1\mid X_{1^c}}$ is any conditional distribution of $X_1$ given $X_{1^c}$, and $\mP_{\!X_{1^c}}$ is any distribution of $X_{1^c}$. In \eqref{eq:two_user_C1}, $\mP_{\!X_1}$ is any distribution of $X_1$.

Note from \eqref{eq:two_user_C1} that any rate tuple $(R_1,0,\ldots, 0)$ where $R_1< R_1^{\max}$ is achievable \emph{classically} by setting $X_{1^c}=x_{1^c}$ that maximizes \eqref{eq:two_user_C1} and coding over the remaining point-to-point channel $X_1\to Y$ for $W_1$.
Now, for $k\in [K]$, define $C_k^{\C} \triangleq \sup \{R_k\colon (R_1,\ldots, R_K) \mbox{ is achievable classically}\}$. 
We have that $R_1 \leq R_1^{\max} \leq C_1^{\C}$, if $(R_1,\ldots, R_K)$ is achievable with  non-signaling assistance. In other words, non-signaling assistance cannot improve the highest rate that can be achieved for User $1$. By symmetry, for any $(R_1,\ldots, R_K)$ that is achievable with non-signaling assistance, $R_1+R_2+\cdots+R_K \leq C_1^{\C} + C_2^{\C} + \cdots + C_K^{\C} \leq  K\max_{k\in[K]} \{C_k^{\C}\}$, and thus $C^{\NS} \leq K\max_{k\in[K]} \{C_k^{\C}\}$. Since the classical (sum-rate) capacity $C^{\C} \geq \max_{k\in[K]} \{C_k^{\C}\}$, we conclude that $C^{\NS} \leq KC^{\C}$ for any $K$-user MAC (without state). 
\end{proof}

\section{Conclusion}
Exponential, unbounded and robust gains in capacity are established for various $K$-user multiple access channels with causal CSIT, due to quantum entanglement assistance provided to the transmitters. 
 This points to a rich landscape of  open questions for future work regarding how  the gains extend beyond the specific examples considered here. For example, it is not known if similar gains exist when the CSIT  is \emph{strictly} causal \cite{lapidothDoubleState}, non-causal \cite{NoncausalCSIT}, or mixed as in \cite{MAC_mixed_states}. While it is evident that the gains can be further improved in some cases by providing entanglement assistance also to the receiver (because such gains have already been established in \cite{Yao_Jafar_QEACWS} for the point-to-point channel with causal CSIT, which is a special case of the MAC with causal CSIT), it is not yet known how significant this additional gain could be. More broadly, it is not known how the gains generalize beyond the MAC  to other state-dependent communication networks with quantum entanglement assistance.

\appendix
\section{Probability of error of a coding scheme} \label{sec:prob_error}
Recall that we are given the $K$-user state-dependent MAC $(\mN, \mP_{\!S_1S_2\cdots S_K})$. 
For a coding scheme $\Gamma\in \mathfrak{F}^{\C}(M_1,\ldots, M_K,n) \cup \mathfrak{F}^{\EA}(M_1,\ldots, M_K,n)$, the probability of error is $P_e(\Gamma) = \Pr\big( (\widehat{W}_1,\ldots,\widehat{W}_K)\neq (W_1,\ldots, W_K) \big)$. In the following we derive an expression for $1-P_e(\Gamma)$.

Let us use bold letters to denote the corresponding $K$-tuples. For example, let ${\bf W}$ represent the $K$-tuple $(W_1, W_2,\ldots, W_K)$. Then,
\begin{align}
	&1-P_e(\Gamma) = \Pr \big( \widehat{\bf W} = {\bf W}\big) \\
	&= \sum_{{\bf w}, {\bf s}^n} \Pr({\bf W} = {\bf w}) \Pr({\bf S}^n={\bf s}^n) \Pr\big( \widehat{\bf W} = {\bf w} \mid {\bf W}={\bf w}, {\bf S}^n={\bf s}^n\big) \label{eq:prob_error_first}
\end{align}
since ${\bf W} = (W_1,W_2,\ldots, W_K)$ is independent of ${\bf S}^n = (S_1^n,S_2^n,\ldots, S_K^n)$. Note that $\Pr({\bf W}={\bf w}) = (M_1M_2\cdots M_K)^{-1}$ since the messages are generated independently uniform. Meanwhile,  $\Pr({\bf S}^n = {\bf s}^n) = \prod_{i=1}^n \mP_{\!S_1S_2\cdots S_K}(s_{1,i},s_{2,i},\ldots, s_{K,i})$. Clearly, $\Pr({\bf W} = {\bf w}) \Pr({\bf S}^n={\bf s}^n)$ does not depend on $\Gamma$, and in the following we derive $\Pr\big( \widehat{\bf W} = {\bf w} \mid {\bf W}={\bf w}, {\bf S}^n={\bf s}^n\big)$.

We simply write $a$ for the event $A=a$. Consider two cases.

\noindent{\bf Case I:} If  $\Gamma = \big(\mP_{\!J}, ((\phi_{1}^{(i)}, \phi_{2}^{(i)},\ldots, \phi_{K}^{(i)} ))_{i\in [n]},  \psi\big)  \in \mathfrak{F}^{\C}(M_1,M_2,\ldots, M_K,n)$ is a classical coding scheme with shared randomness $J$ (which is independent of $({\bf W}, {\bf S}^n)$, then 
\begin{align}
	&\Pr\big( \widehat{\bf W} = {\bf w} \mid {\bf W}={\bf w}, {\bf S}^n={\bf s}^n\big)\notag\\
	&=\sum_{j\in \mathcal{J}}\mP_{\!J}(j)  \Pr\big( \widehat{\bf W} = {\bf w} \mid {\bf w}, {\bf s}^n, j\big)\\
	&=\sum_{j\in \mathcal{J}}\mP_{\!J}(j) \sum_{y^n\colon \psi(y^n,j) = {\bf w} } \prod_{i=1}^n  \mN\Big(y_i \mid \phi_1^{(i)}(w_1,s_1^i,j),\ldots, \phi_K^{(i)}(w_K,s_K^i,j), s_{1,i},\ldots, s_{K,i}\Big) 
	\label{eq:prob_error_classical}
\end{align}
To see the last step, note that for each $k\in [K]$, $i\in [n]$, and $j\in \mathcal{J}$, $\phi_k^{(i)}(w_k,s_k^i,j)$ is the input at Transmitter $k$ for the $i^{th}$ channel use, and the sum is taken over the subset of $y^n$ for which the decoder $\psi(y^n,j)$ outputs the correct messages ${\bf w}$.

\begin{remark}[Classical randomness is not helpful] \label{rem:randomness}
	Note that for a classical scheme $\Gamma$, $P_e(\Gamma)$ can be expressed as $\sum_{j\in \mathcal{J}}\mP_{\!J}(j)p(j)$ for some function $p\colon \mathcal{J}\to [0,1]$, which must be lower bounded by $p(j^*)$ for some $j^* \in \arg\min_j p(j)$. This means that the probability of error of a deterministic coding scheme by conditioning on $J=j^*$ in $\Gamma$ cannot be worse (bigger) than $P_e(\Gamma)$. Therefore, the minimum of $P_e(\Gamma)$ over $\mathfrak{F}^{\C}(M_1,M_2,\ldots, M_K,n)$ is achieved by some deterministic coding scheme, i.e., without any randomness.
\end{remark}

\noindent {\bf Case II:} If $\Gamma = \big(\rho,((\mathcal{E}^{(1,i)}, \mathcal{E}^{(2,i)},\ldots, \mathcal{E}^{(K,i)}))_{i\in [n]}, \psi \big) \in \mathfrak{F}^{\EA}(M_1,M_2,\ldots, M_K,n)$ is an entanglement-assisted coding scheme, then
\begin{align}
	&\Pr\big(\widehat{\bf W}={\bf w} \mid {\bf W}={\bf w}, {\bf S}^n={\bf s}^n) \notag \\
	&=\sum_{{\bf x}^n, y^n}\Pr\big(\widehat{\bf W}={\bf w}, {\bf x}^n, y^n \mid {\bf w},  {\bf s}^n) \\
	&=\sum_{{\bf x}^n, y^n}\Pr\big( {\bf x}^n \mid {\bf w},  {\bf s}^n)  \Pr(y^n \mid {\bf w}, {\bf s}^n, {\bf x}^n) \Pr\big( \widehat{\bf W}={\bf w}\mid {\bf w}, {\bf s}^n, {\bf x}^n, y^n \big) \\
	&=\sum_{{\bf x}^n}\Pr\big( {\bf x}^n \mid {\bf w},  {\bf s}^n)  \sum_{y^n\colon \psi(y^n)= {\bf w}} \prod_{i=1}^n \mN(y_i\mid {\bf x}_i, {\bf s}_i)\label{eq:prob_error_EA_intermediate_1}
\end{align}
To justify the last step, first we note that since ${\bf W} \leftrightarrow  ({\bf S}^n, {\bf X}^n) \leftrightarrow {\bf Y}^n$ form a Markov chain and the channel is memoryless,  we have $\Pr(y^n \mid {\bf w}, {\bf s}^n, {\bf x}^n) = \Pr(y^n \mid  {\bf s}^n, {\bf x}^n) = \prod_{i=1}^n \mN(y_i\mid {\bf x}_i, {\bf s}_i)$. Then, note that $\Pr\big( \widehat{\bf W}={\bf w}\mid {\bf w}, {\bf s}^n, {\bf x}^n, y^n \big) = 1$ if the decoded messages $\psi(y^n)={\bf w}$, and it equals $0$ otherwise. 

The term $\Pr\big( {\bf x}^n \mid {\bf w}, {\bf s}^n \big)$ depends on the encoding quantum instruments. To state it explicitly, for each $k\in [K]$, let 
\begin{equation*}
	\overline{\mathcal{E}}^{(k)}_{x_k^n\mid (w_k,s_k^n)} \triangleq \mathcal{E}_{x_{k,n}\mid (w_k,s_k^n,x_k^{n-1})}^{(k,n)} \circ \mathcal{E}_{x_{k,n-1}\mid (w_k,s_k^{n-1},x_k^{n-2})}^{(k,n-1)}  \circ \cdots \circ \mathcal{E}_{x_{k,1}\mid (w_k, s_{k,1})}^{(k,1)}
\end{equation*}
denote the overall encoding quantum instrument at Transmitter $k$, conditioned on the message  and the channel states across $n$ channel uses, where $\circ$ denotes the composition of two maps. Then
\begin{align}
	\Pr\big( {\bf x}^n \mid {\bf w}, {\bf s}^n \big) = \Tr \Big[ \big( \underbrace{\overline{\mathcal{E}}_{x_1^n\mid (w_1, s_1^n)}^{(1)} \otimes \cdots \otimes \overline{\mathcal{E}}_{x_K^n\mid (w_K,s_K^n)}^{(K)}\big)(\rho)}_{\triangleq \widetilde{\rho}_{{\bf x}^n, {\bf w}, {\bf s}^n} } \Big].
\end{align}
Note that $\widetilde{\rho}_{{\bf x}^n, {\bf w}, {\bf s}^n}$ is an unnormalized density operator. The normalized version is $\widetilde{\rho}_{{\bf x}^n, {\bf w}, {\bf s}^n}/\Pr\big( {\bf x}^n \mid {\bf w}, {\bf s}^n \big)$.

In either case, plugging $\Pr\big(\widehat{\bf W}={\bf w} \mid {\bf W}={\bf w}, {\bf S}^n={\bf s}^n)$ into \eqref{eq:prob_error_first} yields an explicit expression for $1-P_e(\Gamma)$. \hfil \qed

\section{Proof of Theorem \ref{thm:additive_MACs_classical}}\label{proof:additive_MACs_classical}
\begin{proof}[Proof of classical achievability]
Let $t_1,t_2,\ldots,t_K$ be  functions that minimize the objective in \eqref{eq:min_obj}. For each $k\in [K]$, at each channel use, Transmitter $k$ observes $S_k$ and sends $X_k = X_k' \oplus_m t_k(S_k)$ where $X_k' \in [0:m-1]$. The resulting channel output is $Y = (X_1'\oplus_m t_1(S_1)) \oplus_m \cdots \oplus_m (X_K'\oplus_m t_K(S_K)) \oplus_m Z = X_1'\oplus_m \cdots \oplus_m X_K' \oplus_m Z'$ where $Z' \triangleq t_1(S_1) \oplus_m \cdots \oplus_m t_K(S_K) \oplus_m Z$ and we have $H(Z') = H_{\min}$. Therefore, we have obtained a converted channel $Y = X_1'\oplus_m \cdots \oplus_m X_K' \oplus_m Z'$ with $H(Z') = H_{\min}$ and $Z'$ independent of $(X_1',\ldots,X_K')$. In the converted channel, if we set $X_k'=0$ for $k\in \{2,\ldots, K\}$, then we have a point-to-point channel (from Transmitter $1$ to the receiver) with $Y = X_1'\oplus_m Z'$. It follows from the direct coding theorem of a point-to-point channel that, by choosing $X_1'$ uniformly over $[0:m-1]$, any rate $R < I(X_1';X_1' \oplus_m Z') = H(X_1' \oplus_m Z') - H(X_1' \oplus_m Z' \mid X_1') = \log(m) - H(Z') = \log_2(m) - H_{\min}$ is achievable for message $W_1$. By symmetry and a standard time-division multiple access (TDMA) strategy, any rate tuple satisfying $R_1+\cdots +R_K < \log_2(m) - H_{\min}$ is achievable.
\end{proof}

\begin{proof}[Proof of classical converse]
Given any classical coding schemes $\{\Gamma_n\}$ with $\lim_{n\to \infty} P_e(\Gamma_n) = 0$ and $\lim_{n\to\infty}\log_2(M_{k,n})/n \geq R_k,\forall k\in [K]$, recall that $J$ denotes the shared random variable that is independent of the messages $(W_1,\ldots, W_K)$. We have
\begin{align}
	&n(R_1+\cdots +R_K) -o(n) \notag \\
	&\leq I(\underbrace{W_1,\ldots, W_K}_{\triangleq {\bf W}};Y^n \mid J) \label{eq:use_Fano} \\
	&= H(Y^n\mid J) - \sum_{i=1}^n H(Y_i\mid {\bf W}, Y^{i-1},J) \label{eq:use_chain} \\
	&\leq n\log_2(m) - \sum_{i=1}^n H(Y_i\mid {\bf W}, Y^{i-1}, S_1^{i-1},\ldots, S_{K}^{i-1},J) \label{eq:use_CRE} \\
	&= n\log_2(m) - \sum_{i=1}^n H(Y_i\mid {\bf W},  S_1^{i-1},\ldots, S_{K}^{i-1},J)\label{eq:use_encoder} 
\end{align}
where Step \eqref{eq:use_Fano} follows from Fano's inequality and the fact that the shared randomness $J$  is independent of the messages $(W_1,\ldots, W_K)$. Step \eqref{eq:use_chain} uses the definition of conditional mutual information and the chain rule for conditional entropy. Step \eqref{eq:use_CRE} follows as conditioning does not increase entropy. Step \eqref{eq:use_encoder} is because of the Markov chain $Y^{i-1}\leftrightarrow ({\bf W},S_1^{i-1},\ldots, S_{K}^{i-1},J ) \leftrightarrow Y_i$ since the channel is memoryless. 

Now, for any fixed realization of $({\bf W},  S_1^{i-1},\ldots, S_{K}^{i-1},J)$, we note that for each $k\in [K]$, the input at Transmitter $k$ at the $i^{th}$ channel use is $X_{k,i} = t_k(S_{k,i})$ for some function $t_k\colon \mathcal{S}_k \to \mathcal{X}_k$. Also note that $(S_{1,i}, S_{2,i}, \ldots, S_{K,i}, Z_i)$ are independent of $({\bf W},  S_1^{i-1},\ldots, S_{K}^{i-1},J)$. Therefore, for each $i\in [n]$,
\begin{align}
	&H(Y_i\mid {\bf W},  S_1^{i-1},\ldots, S_{K}^{i-1},J) \notag \\
	&\geq \min_{t_1,\ldots, t_K}H\big(t_1(S_{1,i})\oplus_m \cdots \oplus_m t_K(S_{K,i})\oplus_m Z_i \big)\\
	&= H_{\min}
\end{align}
From \eqref{eq:use_encoder}, we have  $n(R_1+\cdots +R_K)-o(n) \leq n\log_2(m)-nH_{\min}$. Thus, we conclude that $C^{\C}\leq \log_2(m) - H_{\min}$. 	
\end{proof}

\section{Proof of Theorem \ref{thm:EAIS}} \label{proof:EAIS}
Given $\rho\in \mathcal{D}(\mathcal{H}_{Q_1} \otimes \cdots \otimes \mathcal{H}_{Q_K})$, for each channel use, let the $K$ transmitters share the entangled resource in the state $\rho$, such that Transmitter $k$ has $Q_k$ for each $k\in [K]$. For each $k\in [K]$, Transmitter $k$ measures $Q_k$ on the POVM $\mathcal{F}^{(k)}_{s_k} = \{F^{(k)}_{b_k\mid s_k}\}_{b_k\in [0:m-1]}$ when the channel state $S_k=s_k$, and obtains the outcome $B_k$. It then sends $X_k=X_k'\oplus_m B_k$ for some $X_k' \in \mathcal{X}_k$. The receiver sees $Y = X_1\oplus_m \cdots \oplus_m X_k\oplus_m Z = X_1'\oplus_m \cdots \oplus_m X_k' \oplus_m ( B_1\oplus_m \cdots  \oplus_m B_K \oplus_m Z)$. Denote $Z'\triangleq ( B_1\oplus_m \cdots  \oplus_m B_K \oplus_m Z)$. It follows that the converted channel $(X_1', \ldots, X_K') \to Y$ is another additive MAC with state-dependent interference term equal to $Z'$.  In the converted channel, if we set $X_k'=0$ for $k\in \{2,\ldots, K\}$, then we have a point-to-point channel (from Transmitter $1$ to the receiver) with $Y = X_1'\oplus_m Z'$. It follows from the direct coding theorem of a point-to-point channel that, by choosing $X_1'$ uniformly over $[0:m-1]$, any rate $R < I(X_1';X_1' \oplus_m Z') = H(X_1' \oplus_m Z') - H(X_1' \oplus_m Z' \mid X_1') = \log_2(m) - H(Z')$ is achievable for message $W_1$. By symmetry and a standard time-division multiple access (TDMA) strategy, any rate tuple satisfying $R_1+\cdots +R_K<\log_2(m)-H(Z')$ is achievable. \hfil \qed

\section{Proofs of Propositions 3, 4, and justification for Remark 4} 
\subsection{Proof of Proposition \ref{prop:ChannelB1}} \label{proof:MerminGHZ}
\begin{proof}[Proof of classical capacity of Channel B1] 
To obtain $C^{\C}$, we apply Theorem \ref{thm:additive_MACs_classical}. The argument to find $H_{\min}$ is as follows. Let us focus on the objective in \eqref{eq:min_obj}. 
Let $V \triangleq t_1(S_1)\oplus \cdots \oplus t_K(S_K) \oplus Z$. We then have $H_{\min} = \min_{t_1,\ldots, t_K} H(V)$. Let $\Delta \triangleq \Pr(V=0) - \Pr(V=1)$, and thus $H(V) = H_b(\frac{1+|\Delta|}{2})$.
Let $\Sigma_0 \triangleq \{(s_1,\ldots, s_K)\in \{0,1\}^K\colon \sum_{k=1}^K s_k \pmod 2 = 0\}$. Next let us bound $\Delta$ following essentially a result of Mermin \cite{merminGHZ}. A detailed proof of this bound has appeared in \cite{brassard2004recasting, watts2019exponential} as well. We  have,
\begin{align}
	&\Delta =\mathbb{E}[(-1)^V]  \\
	&=\frac{1}{2^{K-1}}\sum_{s^K\in \Sigma_0} \mathbb{E}[(-1)^V\mid S^K=s^K]\\
	&= \frac{1}{2^{K-1}}\sum_{s^K\in \Sigma_0}  (-1)^{(s_1+\cdots + s_K)/2} \prod_{k=1}^K (-1)^{t_k(s_k)}\\
	&=\frac{1}{2^{K-1}} {\rm Re} \Big\{ \sum_{s^K\in \{0,1\}^K} \prod_{k=1}^K i^{s_k} (-1)^{t_k(s_k)} \Big\} \label{eq:intermediate_1}
\end{align}
For each $k\in [K]$, since $s_k\in \{0,1\}$, $(-1)^{t_k(s_k)} = (-1)^{t_k(0)}\times r_k^{s_k}$ where $r_k \triangleq \tfrac{(-1)^{t_k(1)}}{(-1)^{t_k(0)}} \in \{1,-1\}$. Therefore,
\begin{align}
	&\eqref{eq:intermediate_1} = \frac{1}{2^{K-1}} \Big(\prod_{k=1}^K (-1)^{t_k(0)}\Big) {\rm Re} \Big\{ \sum_{s^K\in \{0,1\}^K} \prod_{k=1}^K (i r_k)^{s_k} \Big\}\\
	&=\frac{\pm 1}{2^{K-1}}{\rm Re}\Big\{ \prod_{k=1}^K \sum_{s\in \{0,1\}} (ir_k)^{s} \Big\} \\
	&= \frac{\pm 1}{2^{K-1}}{\rm Re}\Big\{ \prod_{k=1}^K (1+ i r_k) \Big\} \\
	&= \frac{\pm \sqrt{2}^K}{2^{K-1}}{\rm Re}\Big\{ \prod_{k=1}^K e^{ir_k\pi/4 } \Big\}
\end{align}
It follows that when $K$ is even, $|\Delta| \leq \frac{\sqrt{2}^K}{2^{K-1}}$, with equality when $r_k=(-1)^{k}$. When $K$ is odd,  $|\Delta| \leq \frac{\sqrt{2}^{K}}{2^{K-1}}\times \frac{\sqrt{2}}{2}$, with equality when $r_k=1$. Therefore, $|\Delta| \leq 2^{1-\lceil K/2 \rceil}$. Since in both cases the equality can be achieved, we have that $H_{\min} = \min_{t_1,\ldots,t_K}H(V) = H_b(\frac{1}{2}+2^{-\lceil K/2 \rceil})$, and thus $C^{\C} = 1-H_b(\frac{1}{2}+2^{-\lceil K/2 \rceil})$.
\end{proof}

\begin{proof}[Proof of EA capacity of Channel B1]
Let ${\sf X}$ be the Pauli X operator (gate) such that ${\sf X}\ket{0} = \ket{1}$ and ${\sf X}\ket{1} = \ket{0}$. Let ${\sf R}$ be the phase gate such that ${\sf R}\ket{0} = \ket{0}$ and ${\sf R}\ket{1} = i \ket{1}$. 
The scheme that achieves $C^{\EA}=1$ uses a channel conversion strategy combined with the strategy that wins the Mermin-GHZ game \cite{merminGHZ}. For each channel use, the $K$ transmitters convert the channel $Y=X_1\oplus \cdots \oplus X_K \oplus Z$ to an interference-free channel $Y=X_1'\oplus \cdots \oplus X_K'$, where $X_k'\in \{0,1\}$ for each $k\in [K]$, by performing individual state-dependent measurements on their respective entangled qubits. Specifically, for each channel use, the $K$ transmitters consume $K$ qubits, denoted by $Q_1\cdots Q_K$, that are set in the entangled (GHZ) state $\ket{\Phi^+} \triangleq \frac{\sqrt{2}}{2}(\ket{0}^{\otimes K}+\ket{1}^{\otimes K})$. For each $k\in [K]$, $Q_k$ is with Transmitter $k$. Transmitter $k$ applies the phase gate ${\sf R}$ to $Q_k$ if $S_k=1$. It then measures $Q_k$ on the Pauli X operator. Denote by $\widetilde{B}_k \in \{1,-1\}$ the measurement outcome on $Q_k$, and set $B_k=0$ if $\widetilde{B}_k=1$ and $B_k=1$ if $\widetilde{B}_k=-1$. Transmitter $k$ then sends to the channel $X_k = X'_k\oplus B_k$, where $X_k'\in \{0, 1\}$ denotes the input of the converted channel. 
To analyze the strategy, let $\ket{\Psi}$ denote the state of $Q_1\cdots Q_K$ after the transmitters have applied the conditional phase gates. We have that $\ket{\Psi} = \frac{\sqrt{2}}{2}(\ket{0}^{\otimes K}+i^{S_1+\cdots +S_K}\ket{1}^{\otimes K})$.
If $Z = (S_1+\cdots +S_K)/2 \pmod 2 = 0$, we have $\ket{\Psi} = \ket{\Phi^+}$.  On the other hand, if $Z = 1$, we have $\ket{\Psi} = \frac{\sqrt{2}}{2}(\ket{0}^{\otimes K}-\ket{1}^{\otimes K})\triangleq \ket{\Phi^-}$.
Since $\ket{\Phi^+}$ is a $(+1)$-eigenstate for the operator ${\sf X}^{\otimes K}$ and $\ket{\Phi^-}$ is a $(-1)$-eigenstate for ${\sf X}^{\otimes K}$, it follows that $\Pr(B_1\oplus\cdots \oplus B_K=0\mid Z = 0) = \Pr(\widetilde{B}_1\cdots \widetilde{B}_K=1\mid Z = 0) = |\braket{\Phi^+|\Phi^+}|^2=1$, and similarly $\Pr(B_1\oplus\cdots \oplus B_K=1 \mid Z = 1) = \Pr(\widetilde{B}_1\cdots \widetilde{B}_K=-1\mid Z = 1) = |\braket{\Phi^-|\Phi^-}|^2=1$. Therefore, $\Pr(B_1\oplus\cdots \oplus B_K \oplus Z = 0) = 1$. Since $Y = X_1\oplus \cdots \oplus X_K \oplus Z = (X_1'\oplus \cdots \oplus X_K') \oplus (B_1\oplus\cdots \oplus B_K \oplus Z)$, we have that $Y =  X_1'\oplus \cdots \oplus X_K'$ with certainty. Over this converted channel, any rate tuple that satisfies $R_1+\cdots +R_K\leq 1$ can be achieved by a simple scheduling scheme.

On the other hand, even if all $K$ transmitters cooperate,  they cannot\footnote{For a point-to-point channel with state, \scalebox{0.95}{$(\mN_{Y\mid XS}, \mP_S)$}, the entanglement-assisted (in fact even non-signaling assisted) capacity is upper bounded by \scalebox{0.95}{$\max_{\mP_{\!X\mid S}}I(X;Y\mid S)$} (see \cite{Yao_Jafar_CSITTP}), which evaluates to $1$ in the example.} achieve a rate larger than $1$.  Thus, $C^{\EA} = 1$.	
\end{proof}

\subsection{Proof of Proposition \ref{prop:ChannelB2}} \label{proof:Mermin_indep}
For \emph{Channel B2}, $Z = \lceil (S_1+\cdots + S_K)/2 \rceil \pmod 2 = (S_1+\cdots + S_K+S_{0})  /2 \pmod 2$, where we defined $S_0 \triangleq \bigoplus_{k=1}^K S_k$. To show that $C^{\C} \leq 1-H_b(\tfrac{1}{2}+2^{\lceil (K+1)/2\rceil})$, note that for Channel B2, $H_{\min} = \min_{t_1,\ldots, t_K}H(\bigoplus_{k=1}^K t_k(S_k) \oplus Z) \geq \min_{t_0,t_1,\ldots, t_K}H(\bigoplus_{k=0}^K t_k(S_k) \oplus Z) = H_b(\frac{1}{2}+2^{-\lceil {(K+1)}/2 \rceil})$, where the last step follows the same lines of the proof of the classical capacity of Channel B1 by considering $K+1$ users. Therefore, $C^{\C}\leq 1-H_b(\frac{1}{2}+2^{-\lceil {(K+1)}/2 \rceil})$. We again have $C^{\C}=O(2^{-K})$ by Lemma \ref{lem:approximation}.

With entanglement assistance, we again use the same channel conversion strategy. Note that conditioned on $S_0=0$, which happens with probability $1/2$, Channel B2 becomes Channel B1, so the interference term is completely canceled by the channel conversion strategy. Conditioned on $S_0=1$, after the transmitters apply their conditional phase gates, the state of the qubits $\ket{\Psi}$ is either equal to $\frac{\sqrt{2}}{2}(\ket{0}^{\otimes K} + i \ket{1}^{\otimes K}) = \frac{1}{2}\big((1+i) \ket{\Phi^+} + (1-i) \ket{\Phi^-} \big)$, or equal to $\frac{\sqrt{2}}{2}(\ket{0}^{\otimes K} - i \ket{1}^{\otimes K}) = \frac{1}{2}\big((1-i) \ket{\Phi^+} + (1+i) \ket{\Phi^-} \big)$. Note that in either case, $\ket{\Psi}$ is in the span of $\{\ket{\Phi^+}, \ket{\Phi^-}\}$. 
Also, for all cases where $S_0=1$, we have $|\braket{\Phi^+|\Psi}|^2 = |\braket{\Phi^-|\Psi}|^2 = 1/2$, regardless of the value of $Z$. 
It follows that $\Pr(B_1\oplus \cdots \oplus B_K \oplus Z = z \mid S_0=1) = 1/2, \forall z\in \{0,1\}$. Recall that $Y = (X_1'\oplus \cdots \oplus X_K') \oplus (B_1\oplus\cdots \oplus B_K \oplus Z)$. Observe that conditioned on $S_0=1$, the converted channel is completely noisy, in that the output is uniformly distributed over $\{0,1\}$, regardless of the inputs $X_1',\ldots, X_K'$. Overall, we have   $Y = X_1'\oplus \cdots \oplus X_K'\oplus Z'$ where $Z'$ is an independent noise term with $\Pr(Z'=0) = 3/4$ and $\Pr(Z'=1) = 1/4$. Thus, any rate tuple  satisfying $R_1+\cdots+R_K<1-H_b(3/4)$ is achievable with entanglement assistance. \hfil \qed

\subsection{Justification for Remark \ref{rem:robust} (Robustness)} \label{proof:robust}
The above entanglement-assisted scheme assumes that the qubits consumed for each channel use are in the perfect GHZ state $\ket{\Phi^+}$. In contrast, suppose that each qubit (independently) has a  probability $1-\gamma$ of being completely depolarized, where $\gamma \in [0,1]$. Consider any use of Channel B1 (or Channel B2). 
Let $D$ denote the event that ``at least one qubit in the GHZ state is depolarized," and let $\overline{D}$ denote the complement of $D$, i.e., ``all $K$ qubits are intact." Then we have $\Pr(D) = 1-\gamma^K$ and $\Pr(\overline{D}) = \gamma^K$. It is not difficult to see that the depolarization of any qubit (say $Q_k$)  causes $B_k$ to be a random noise uniformly distributed over $\{0,1\}$. It follows that $\Pr(Y=X_1'\oplus \cdots \oplus X_K' \mid D) = \frac{1}{2}$. Thus, $\Pr(Y=X_1'\oplus \cdots \oplus X_K') = \Pr(Y=X_1'\oplus \cdots \oplus X_K' \mid D)\Pr(D) + \Pr(Y=X_1'\oplus \cdots \oplus X_K' \mid \overline{D})\Pr(\overline{D}) = \frac{1}{2}(1-\gamma^{K})+\alpha \gamma^{K} = \frac{1}{2}+(\alpha-\frac{1}{2})\gamma^K$, where $\alpha=1$ for Channel B1, and $\alpha = \frac{3}{4}$ for Channel B2. 
Equivalently,  write the converted channel as $Y = X_1'\oplus \cdots \oplus X_K'\oplus Z'$ where $Z'$ is an independent noise term with $\Pr(Z'=0) = \frac{1}{2}+(\alpha-\frac{1}{2})\gamma^K$ and $\Pr(Z'=1) = \frac{1}{2}-(\alpha-\frac{1}{2})\gamma^K$. Thus, any rate tuple satisfying $R_1+\cdots+R_K<1-H_b\big(\frac{1}{2}+(\alpha-\frac{1}{2})\gamma^K\big)$ is achievable, via a binary symmetric channel (BSC) code for each message along with TDMA over this converted channel. The rate thus achieved  is $\Omega(\gamma^{2K})$ as $K \to \infty$ by Lemma \ref{lem:approximation}.

Recall from Proposition \ref{prop:ChannelB1} and Proposition \ref{prop:ChannelB2} that $C^{\C} = O\big((\frac{1}{2})^K\big)$, we conclude that as long as $\gamma^2 > \frac{1}{2}$, i.e., $\gamma > \frac{\sqrt{2}}{2} \approx 70\%$, there is still an exponential capacity gain from entanglement assistance for Channel B1 and Channel B2, and to achieve such a gain it is sufficient (albeit not necessarily optimal) to directly apply the channel conversion strategy  designed for the case of perfect entanglement. \hfil \qed

\section{Proofs of Propositions \ref{prop:ClassA}, \ref{prop:ClassB} and \ref{prop:ClassC}} \label{proof:use_xor_games}
The proofs in this section require results on XOR games\footnote{XOR games belong to a special class of non-local games, which serve as useful frameworks for studying quantum and general nonlocality.} \cite{ambainis2012quantum, ambainis2012advantage, ambainis2013provable, Briet_and_Vidick}. For our purpose, let us consider the following definition of XOR games in Definition \ref{def:XOR_game}.
\begin{definition}[Multiplayer XOR game] \label{def:XOR_game}
	Let $K\geq 2$. A $K$-player XOR game is specified by a tuple $(\{\mathcal{A}_k\}_{k\in [K]},\mP_{\!A_1\cdots A_K},f)$ (known to everyone). There are $K$ non-communicating players, each is given an input variable ($A_k\in \mathcal{A}_k$ for the $k^{th}$ player), and is asked to produce an output variable ($B_k\in \{0,1\}$ for the $k^{th}$ player). The prior distribution of $(A_1,A_2,\ldots, A_K)$ is $\mP_{\!A_1A_2\cdots A_K}$. The winning condition is defined by a function $f \colon \mathcal{A}_1\times \cdots \times \mathcal{A}_K \to \{0,1\}$, such that the players win the game if $B_1  \oplus \cdots \oplus B_K = f(A_1,\ldots, A_K)$.
\end{definition}
 
Given an XOR game, we study the classical strategies and the entanglement-assisted (EA) strategies that win the game (with certain probability). The definitions for these strategies are given as follows.
\begin{definition}[Classical game strategy] \label{def:classical_game_strategy}
	In a classical game strategy, Player $k$ produces $B_k = t_k(A_k)$ where $t_k\colon \mathcal{A}_k\to \mathcal{B}_k$ is a function. The strategy should specify $t_k$ for Player $k$. Denote by $\omega^{\C}$ the optimal probability of winning the game using classical strategies. Let us call $\beta^{\C} \triangleq \omega^{\C} - (1-\omega^{\C}) = 2\omega^{\C} -1$ the classical bias.\footnote{The players may generate their answers using additional shared randomness; however, this cannot improve the optimal winning probability.}
\end{definition}
\begin{definition}[EA game strategy]
	In an entanglement-assisted game strategy, the $K$ players share an (entangled) $K$-partite quantum resource $Q_1\cdots Q_K$ in the state $\rho$. Player $k$ produces $B_k$ by conducting a measurement on $Q_k$ depending on $A_k$. The strategy should specify the measurement conditioned on $A_k=a_k$ for each $a_k\in \mathcal{A}_k$ (e.g., a POVM $\mathcal{F}^{(k)}_{a_k}= \{F^{(k)}_{b_k \mid a_k}\}_{b_k \in \{0,1\}}$ with binary outputs in $\{0,1\}$)  for Player $k$.  Denote by $\omega^{\EA}$ the optimal probability of winning the game using EA strategies. Let us call $\beta^{\EA} \triangleq \omega^{\EA} - (1-\omega^{\EA}) = 2\omega^{\EA} -1$ the entanglement-assisted bias.
\end{definition}

Now, let us make a connection from any XOR game to an $(m=2)$ additive MACs with state-dependent interference.
Given the XOR game $(\{\mathcal{A}_k\}_{k\in [K]},\mP_{\!A_1\cdots A_K},f)$, consider the following additive MAC with state-dependent interference: the state $S_k$ of the MAC is equal to $A_k$ of the game, and the interference term $Z$ is equal to $f(S_1,\ldots, S_K)$. It follows that the channel has the form
\begin{align} \label{eq:channel_from_XOR_game}
	Y = X_1\oplus X_2 \oplus \cdots \oplus X_K \oplus f(S_1,\ldots, S_K).
\end{align}
We then have the following lemmas, which connect the winning probability of a game to the capacity of the corresponding state-dependent MAC.
\begin{lemma} \label{lem:equivalence_classical}
	If the XOR game has optimal classical winning probability equal to $\omega^{\C}$,  then the corresponding channel defined in \eqref{eq:channel_from_XOR_game} has $C^{\C} = 1-H_b(\omega^{\C}) = 1-H_b(\frac{1}{2}+\frac{\beta^{\C}}{2})$.
\end{lemma}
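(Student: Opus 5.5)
The plan is to apply Theorem \ref{thm:additive_MACs_classical} with $m=2$ and $Z=f(S_1,\ldots,S_K)$. That theorem gives $C^{\C}=1-H_{\min}$ with
\begin{align*}
H_{\min}=\min_{t_1,\ldots,t_K} H\big(t_1(S_1)\oplus\cdots\oplus t_K(S_K)\oplus f(S_1,\ldots,S_K)\big).
\end{align*}
It therefore suffices to show that $H_{\min}=H_b(\omega^{\C})$. For fixed deterministic $t_1,\ldots,t_K$, let $V\triangleq t_1(S_1)\oplus\cdots\oplus t_K(S_K)\oplus f(S_1,\ldots,S_K)$. Since $S_k=A_k$ has the same joint distribution as the game inputs, the event $V=0$ is exactly the event that the classical game strategy $B_k=t_k(A_k)$ wins. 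Write $p_t\triangleq\Pr(V=0)$, the winning probability of strategy $(t_1,\ldots,t_K)$. Because $V$ is binary, $H(V)=H_b(p_t)$.

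The main point is to convert ``maximize the winning probability'' into ``minimize $H_b(p_t)$''. Since $H_b(p)$ is symmetric about $1/2$ and decreasing in $|p-1/2|$, minimizing $H_b(p_t)$ is the same as maximizing $|p_t-\tfrac12|$ over strategies. One direction is immediate: choosing the optimal classical strategy gives $p_t=\omega^{\C}\ge\tfrac12$. This holds because flipping one player's output, $t_1\mapsto t_1\oplus 1$, maps $p_t$ to $1-p_t$, so the optimum is at least $1/2$. Hence $H_{\min}\le H_b(\omega^{\C})$.

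For the reverse inequality, take any strategy $t$. If $p_t\ge\tfrac12$, then $\tfrac12\le p_t\le\omega^{\C}$, so $H_b(p_t)\ge H_b(\omega^{\C})$. If $p_t<\tfrac12$, the flipped strategy $(t_1\oplus1,t_2,\ldots,t_K)$ is also a valid classical strategy and wins with probability $1-p_t>\tfrac12$. So $1-p_t\le\omega^{\C}$, and $H_b(p_t)=H_b(1-p_t)\ge H_b(\omega^{\C})$. Thus $H_{\min}=H_b(\omega^{\C})$, and $C^{\C}=1-H_b(\omega^{\C})$.

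The identity $1-H_b(\omega^{\C})=1-H_b(\tfrac12+\tfrac{\beta^{\C}}{2})$ follows directly from $\beta^{\C}=2\omega^{\C}-1$. One detail needs checking: the minimization in Theorem \ref{thm:additive_MACs_classical} is over deterministic $t_k$, while $\omega^{\C}$ is defined over deterministic strategies, with the footnote noting that shared randomness does not help. The two optimization sets therefore coincide. The only step that needs care is this symmetry/flip argument for strategies that win with probability below $1/2$.
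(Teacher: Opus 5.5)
Your proposal is correct and follows essentially the same route as the paper. Both apply Theorem \ref{thm:additive_MACs_classical} with $Z=f(S_1,\ldots,S_K)$ and use the flip $t_1\mapsto t_1\oplus 1$ to restrict to strategies with $\Pr(V=0)\ge 1/2$, after which the symmetry and monotonicity of $H_b$ on $[1/2,1]$ give $H_{\min}=H_b(\omega^{\C})$; your two-sided inequality is just a more explicit version of the paper's step.
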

\begin{lemma} \label{lem:achievability_by_XOR}
	If the XOR game has optimal entanglement-assisted winning probability equal to $\omega^{\EA}$, then the corresponding channel defined in \eqref{eq:channel_from_XOR_game} has   $C^{\EA} \geq 1-H_b(\omega^{\EA})=1-H_b(\frac{1}{2}+\frac{\beta^{\EA}}{2})$.
\end{lemma}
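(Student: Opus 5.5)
The statement to prove is Lemma \ref{lem:achievability_by_XOR}. The plan is a direct application of Theorem \ref{thm:EAIS} with $m=2$. Take an optimal (or near-optimal) EA game strategy: a shared state $\rho$ on $Q_1\cdots Q_K$ and, for each $k$ and $a_k\in\mathcal{A}_k$, a binary POVM $\mathcal{F}^{(k)}_{a_k}=\{F^{(k)}_{b_k\mid a_k}\}_{b_k\in\{0,1\}}$. This strategy achieves winning probability $\omega^{\EA}$, i.e., $\Pr(B_1\oplus\cdots\oplus B_K=f(A_1,\ldots,A_K))=\omega^{\EA}$.

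Since $S_k=A_k$ has the same joint distribution $\mP_{\!A_1\cdots A_K}$, we reuse exactly these POVMs as $\mathcal{F}^{(k)}_{s_k}$ in Theorem \ref{thm:EAIS}. The suppressed interference $Z'=B_1\oplus\cdots\oplus B_K\oplus f(S_1,\ldots,S_K)$ then satisfies $\Pr(Z'=0)=\omega^{\EA}$. Hence $H(Z')=H_b(\omega^{\EA})$, and Theorem \ref{thm:EAIS} yields the achievable rate $1-H_b(\omega^{\EA})$. The identity $\omega^{\EA}=\frac12+\frac{\beta^{\EA}}{2}$ is just the definition of the bias, so $C^{\EA}\ge 1-H_b(\frac12+\frac{\beta^{\EA}}{2})$.

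The only subtle point is whether the optimum $\omega^{\EA}$ is actually attained, since it is defined as a supremum over states of arbitrary dimension. If it is not attained, take strategies with winning probabilities $\omega_j\to\omega^{\EA}$. Each gives $C^{\EA}\ge 1-H_b(\omega_j)$, and continuity of $H_b$ gives the bound in the limit. For this step we need $1-H_b(\omega_j)\to 1-H_b(\omega^{\EA})$. This holds because $H_b$ is continuous on $[0,1]$; we may also assume $\omega_j\ge\frac12$ without loss of generality, since flipping $B_1$ swaps $\omega$ and $1-\omega$ and $H_b$ is symmetric. Because $C^{\EA}$ is a supremum of achievable rates, the limit bound holds. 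No further obstacle is expected: the lemma is essentially a restatement of Theorem \ref{thm:EAIS} in game language.
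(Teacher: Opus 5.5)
Your proposal is correct and follows essentially the same route as the paper: you reuse the game's shared state and binary POVMs in the channel-conversion scheme of Theorem~\ref{thm:EAIS}, so that $Z'=B_1\oplus\cdots\oplus B_K\oplus f(S_1,\ldots,S_K)$ has $\Pr(Z'=0)=\omega^{\EA}$, and the rate $1-H_b(\omega^{\EA})$ is achievable. Your approximation argument for when the optimal value is only a supremum is a small but valid refinement, since the paper simply assumes an optimal strategy is given; the reduction to $\omega_j\ge\frac12$ is unnecessary, because continuity of $H_b$ on $[0,1]$ already suffices.
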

The proofs of Lemma \ref{lem:equivalence_classical} and Lemma \ref{lem:achievability_by_XOR} are rather straightforward and are therefore deferred to the end of this section.
We are now ready to prove Propositions \ref{prop:ClassA}, \ref{prop:ClassB} and \ref{prop:ClassC}.

\begin{proof}[Proof of Proposition \ref{prop:ClassA}]
For $L \geq 2$, consider the class of $2$-player XOR games studied in \cite{ambainis2012quantum}, where $\mathcal{A}_1 = \mathcal{A}_2 = [L]$, $\mP_{\!A_1A_2}(a_1,a_2) = 1/L^2, \forall a_1, a_2\in [L]^2$. Suppose one randomly picks a game $\gm_L$ within this class by randomly choosing the value $f(a_1,a_2)$ equally likely in $\{0,1\}$, independently for each $\forall (a_1, a_2) \in [L]^2$. Then \cite[Thm. 1]{ambainis2012quantum} implies that with probability $1-o(1)$, the picked game has $\beta^{\EA}(\gm_L) = \frac{2+o(1)}{\sqrt{L}}$ as $L\to \infty$, and \cite[Thm. 4]{ambainis2012quantum} implies that with probability $1-o(1)$, the picked game has $\beta^{\C}(\gm_L) \leq \frac{2\sqrt{\ln(2)}+o(1)}{\sqrt{L}}$ as $L \to \infty$. 
Note that randomly picking the game $\gm_L$ in this way is equivalently to randomly picking the channel $\ch_L$ in Class A as described in Proposition \ref{prop:ClassA}.
According to Lemma \ref{lem:achievability_by_XOR} and Lemma \ref{lem:equivalence_classical}, we have that $C^{\EA}(\ch_L) \geq 1-H_b(\frac{1}{2} + \frac{1+o(1)}{\sqrt{L}}) \stackrel{(a)}{=} \frac{2}{\ln (2) }L^{-1} + o(L^{-1})$, and that $C^{\C}(\ch_L) \leq 1-H_b(\frac{1}{2} + \frac{\sqrt{\ln(2)}+o(1)}{\sqrt{L}}) \stackrel{(a)}{=} 2L^{-1} + o(L^{-1})$, both with probability $1-o(1)$, where steps labeled by $(a)$ follow from Lemma \ref{lem:approximation}.
\end{proof}
\vspace{0.1cm}
\begin{proof}[Proof of Proposition \ref{prop:ClassB}]
For $K\geq 2$, consider the class of $K$-player XOR games studied in \cite{ambainis2013provable}, where $\mathcal{A}_k=\{0,1\}, \forall k\in [K]$, $\mP_{\!A_1\cdots A_K}(a_1,\ldots, a_K) = 2^{-K}, \forall (a_1,\ldots, a_K)\in \{0,1\}^K$. Suppose one randomly picks a game $\gm_K$ within this class by randomly choosing the values $(G_0,G_1,\ldots, G_K)\in \{0,1\}^{K+1}$ such that $f(a_1,\ldots, a_K) = G_{a_1+\cdots +a_K}, \forall (a_1,\ldots, a_K)\in \{0,1\}^K$. Then \cite[Cor. 4]{ambainis2013provable} implies that with probability $1-o(1)$, the picked game has $\beta^{\EA}(\gm_K) = \Omega(\frac{\sqrt{\log(K)}}{K^{1/4}})$, and \cite[Thm. 4]{ambainis2012advantage} (see also \cite{ambainis2013provable}) implies that with probability $1-o(1)$, the picked game has $\beta^{\C}(\gm_K)=O(\frac{1}{K^{1/4}})$. Note that randomly picking the game $\gm_K$ in this way is equivalent to randomly picking the channel $\ch_K$ in Class B as described in Proposition \ref{prop:ClassB}. According to Lemma \ref{lem:achievability_by_XOR} and Lemma \ref{lem:equivalence_classical}, we have that $C^{\EA}(\ch_K) \geq 1- H_b\big(\frac{1}{2}+\Omega(\frac{\sqrt{\log(K)}}{K^{1/4}})\big) \stackrel{(a)}{=} \Omega(\frac{\log(K)}{K^{1/2}})$, and that $C^{\C}(\ch_K) \leq 1- H_b\big(\frac{1}{2}+O(\frac{1}{K^{1/4}})\big) \stackrel{(a)}{=} O(\frac{1}{K^{1/2}})$, both with probability $1-o(1)$, where steps labeled by $(a)$ follow from Lemma \ref{lem:approximation}.	
\end{proof}
\vspace{0.1cm}
\begin{proof}[Proof of Proposition \ref{prop:ClassC}]
We use the result of \cite[Thm. 1]{Briet_and_Vidick}. It implies that for $r\in \mathbb{N}$ and $L=2^r$, there exists a sequence of $3$-player XOR games $\{\gm_r\}_r$, where $\gm_r$ has  $|\mathcal{A}_1|=|\mathcal{A}_2|=|\mathcal{A}_3|=L^2$, such that $\beta^{\EA}(\gm_r) \geq \Omega\big(\sqrt{L}\log^{-5/2}(L)\big)\beta^{\C}(\gm_r)$. Since $\beta^{\EA}(\gm_r) = 2\omega^{\EA}(\gm_r)-1 \leq 1$, it follows that $\beta^{\C}(\gm_r) = O(\frac{\log^{5/2}(L)}{\sqrt{L}})$, which converges to $0$ as $r\to \infty$.

Consider the corresponding $3$-user MACs $\{\ch_r\}_r$ associated with $\{\gm_r\}_r$. Since $\beta^{\C}(\gm_r) \to 0$ as $r\to \infty$, Lemma \ref{lem:equivalence_classical} then implies that $C^{\C}(\ch_r) = \big(\frac{2}{\ln(2)}+o(1)\big)\big(\beta^{\C}(\gm_r)\big)^2$ as $r\to \infty$. 
On the other hand, Lemma \ref{lem:achievability_by_XOR} implies that $C^{\EA}(\ch_r)\geq 1-H_b\big(\frac{1}{2}+\frac{\beta^{\EA}(\gm_r)}{2}\big) \geq \frac{1}{2\ln(2)}\big(\beta^{\EA}(\gm_r)\big)^2$, where the last inequality follows from Lemma \ref{lem:approximation}. Combining with the result of \cite[Thm. 1]{Briet_and_Vidick}, we conclude that $C^{\EA}(\ch_r)/C^{\C}(\ch_r) = \Omega\big( L \log^{-5}(L) \big) = \Omega(2^{r} r^{-5})$.
\end{proof}

\begin{proof}[Proof of Lemma \ref{lem:equivalence_classical}]
	Applying Theorem \ref{thm:additive_MACs_classical} to the channel in \eqref{eq:channel_from_XOR_game}, we have
	\begin{align}
		C^{\C} &=  1- \min_{t_1,\ldots, t_K} H\big( \underbrace{t_1(A_1) \oplus \cdots \oplus t_K(A_K)\oplus f(A_1,\ldots, A_K)}_{\triangleq V} \big) \\
		& = \max_{t_1,\ldots, t_K} \big( 1- H_b\big( \Pr(V=0) \big) \big) \label{eq:obj_entropy} \\
		& = 1- H_b\big(\max_{t_1,\ldots,t_K}\Pr(V=0)\big) \label{eq:use_monotonicity}
	\end{align}
	where the maximization is over all functions $(t_1,\cdots, t_K)$ such that $t_k\colon \mathcal{A}_k \to \{0,1\}, \forall k\in[K]$, and we define $V(t_1,\ldots, t_K) \triangleq t_1(A_1) \oplus \cdots \oplus t_K(A_K)\oplus f(A_1,\ldots, A_K)$   which depends on the choices of $(t_1,\cdots, t_K)$. We claim that it suffices to consider $(t_1,\cdots, t_K)$ such that $\Pr(V=0)\geq 1/2$. This is because if the maximum in \eqref{eq:obj_entropy} is attained by some $(t_1,\ldots, t_K)$ which yields $\Pr\big(V(t_1,t_2,\ldots, t_K) =0 \big)<1/2$, then the altered choice $(\tilde{t}_1,t_2,\ldots, t_K)$ where $\tilde{t}_1(a_1) \triangleq t_1(a_1) \oplus 1$ for $a_1\in \mathcal{A}_1$ yields $\Pr\big( V(\tilde{t}_1,t_2,\ldots, t_K) = 0\big) = 1- \Pr\big(V(t_1,t_2,\ldots, t_K) =0 \big) > 1/2$, and it also achieve the maximum value in \eqref{eq:obj_entropy} because $1-H_b(p)$ is symmetric about $p=1/2$. Step \eqref{eq:use_monotonicity} follows as the function $1-H_b(p)$ is monotonically increasing in $p\in [1/2, 1]$ (see also Fig. \ref{fig:Hbp_complement}).
	
	Meanwhile, according to the definition of classical strategies (Definition \ref{def:classical_game_strategy}), 
	\begin{align}
		\omega^{\C} &=\max_{t_1,\ldots, t_K}\Pr\big(t_1(A_1) \oplus \cdots \oplus t_K(A_K) = f(A_1,\ldots, A_K)\big) \\
		&= \max_{t_1,\ldots, t_K}\Pr(V = 0) \label{eq:obj_prob}
	\end{align}
	and note that it also suffices to consider $(t_1,\ldots, t_K)$ for which $\Pr(V = 0)\geq 1/2$ because $\omega^{\C} \geq 1/2$. Combining \eqref{eq:use_monotonicity} and \eqref{eq:obj_prob}, we conclude that $C^{\C} = 1-H_b(\omega^{\C})$.
\end{proof}

\begin{proof}[Proof of Lemma \ref{lem:achievability_by_XOR}]
	This lemma essentially follows from Theorem \ref{thm:EAIS}. Suppose we are given the EA strategy that wins the game with probability $\omega^{\EA}$. For each $k\in [K]$, Transmitter $k$ observes the channel state $S_k$ and treats it as the input $A_k$ in the game. Transmitter $k$ then sends $X_k = X_k'\oplus B_k$, where $X_k'\in \{0,1\}$, and $B_k$ is the output associated with Player $k$ according to the strategy of the game. The receiver  sees $Y = (X_1'\oplus B_1) \oplus \cdots \oplus (X_K'\oplus B_K) \oplus f(S_1,\ldots, S_K) = (X_1'\oplus\cdots \oplus X_K') \oplus  (B_1 \oplus \cdots \oplus B_K \oplus f(S_1,\ldots, S_K))$. Since $\Pr(B_1\oplus \cdots \oplus B_K = f(S_1,\ldots S_K)) = \omega^{\EA}$, it follows that $\Pr(Y=X_1'\oplus \cdots \oplus X_K') = \omega^{\EA}$, and that $\Pr(Y=X_1'\oplus \cdots \oplus  X_K'  \oplus 1) = 1-\omega^{\EA}$. Therefore, they obtain a converted channel with $Y = X_1' \oplus \cdots \oplus X_K' \oplus Z'$ where $Z'$ is the effective noise with $\Pr(Z'=0) = \omega^{\EA}$ and $\Pr(Z'=1) = 1-\omega^{\EA}$. A BSC code with TDMA can then be applied to this channel to achieve any rate tuple that satisfies $R_1+R_2+\cdots + R_K < 1-H_b(\omega^{\EA})$.
\end{proof}

\section{The CHSH strategy} \label{proof:CHSH}
Let us explain the CHSH strategy in the context of our channel conversion scheme. Recall that for each channel use, Transmitter $1$ knows $S_1\in \{0,1\}$ and Transmitter $2$ knows $S_2\in \{0,1\}$. 
The two transmitters share a pair of entangled qubits (namely, $Q_1Q_2$) in the state $\ket{\Phi^+}_{Q_1Q_2} = \frac{1}{\sqrt{2}}\big(\ket{0}_{Q_1}\!\ket{0}_{Q_2}+\ket{1}_{Q_1}\!\ket{1}_{Q_2}\big) = \frac{1}{\sqrt{2}}\big(\ket{+}_{Q_1}\!\ket{+}_{Q_2}+\ket{-}_{Q_1}\!\ket{-}_{Q_2}\big)$, where $\ket{+} \triangleq \frac{1}{\sqrt{2}}\big(\ket{0}+\ket{1}\big)$ and $\ket{-} \triangleq \frac{1}{\sqrt{2}}\big(\ket{0}-\ket{1}\big)$. 
For $\theta\in [0,2\pi)$, let ${\sf U}_\theta \triangleq \ket{+}\!\bra{+} + e^{i\theta} \ket{-} \! \bra{-}$ be a unitary operator. For each $k\in \{1,2\}$, Transmitter $k$ applies ${\sf U}_{\theta_k}$ on $Q_k$, ($\theta_k$ depends on $S_k$ and is specified later). The state of the qubits after the operations is,
\begin{align}
	\ket{\Psi}_{Q_1Q_2} &= \frac{1}{\sqrt{2}}\Big(\ket{+}_{Q_1}\!\ket{+}_{Q_2} + e^{i(\theta_1+\theta_2)}\ket{-}_{Q_1}\!\ket{-}_{Q_2}\Big) \\
	&= \frac{1}{2\sqrt{2}}\big( 1+e^{i(\theta_1+\theta_2)} \big)\big(\ket{0}_{Q_1}\!\ket{0}_{Q_2} +\ket{1}_{Q_1}\!\ket{1}_{Q_2} \big) \notag \\
	&~~~~+ \frac{1}{2\sqrt{2}}\big( 1-e^{i(\theta_1+\theta_2)} \big)\big(\ket{0}_{Q_1}\!\ket{1}_{Q_2} +\ket{1}_{Q_1}\!\ket{0}_{Q_2} \big)
\end{align}
For $k\in \{1,2\}$, Transmitter $k$ measures $Q_k$ in the computational basis and obtains $B_k \in \{0,1\}$ by mapping the state $\ket{x}_{Q_k} \to x,\forall x\in \{0,1\}$. Conditioned on $(S_1,S_2)$, the probability of $B_1\oplus B_2 = 0$ is then $2 \times |\frac{1}{2\sqrt{2}}(1+e^{i(\theta_1+\theta_2)})|^2 = \cos^2\big(\frac{\theta_1+\theta_2}{2}\big)$, and  the probability of $B_1\oplus B_2 = 1$ is $1-\cos^2\big(\frac{\theta_1+\theta_2}{2}\big) = \sin^2\big(\frac{\theta_1+\theta_2}{2}\big)$. Now, let $\theta_1 = 0$ if $S_1=0$; $\theta_1 = \pi/2$ if $S_1=1$. Let $\theta_2 = -\pi/4$ if $S_2=0$; $\theta_2=\pi/4$ if $S_2=1$. It can be then verified that $\Pr(B_1\oplus B_2 = 0\mid S_1S_2=0) = \Pr(B_1\oplus B_2 = 1\mid S_1S_2=1) = \cos^2(\pi/8)$. Thus, $\Pr(B_1\oplus B_2 \oplus S_1S_2=0) = \cos^2(\pi/8) = \frac{1}{2}+\frac{\sqrt{2}}{4}$. \hfil \qed

\bibliographystyle{IEEEtran}
\bibliography{../../bib_file/yy.bib}
\end{document}